\newtheorem{definition}{Definition}
\newtheorem{theorem}{Theorem}
\newtheorem{lemma}{Lemma}
\newtheorem{proposition}{Proposition}
\newtheorem{example}{Example}
\newtheorem{corrolary}{Corrolary}
\newif\ifshowallproofs
\newcommand{\HiddenProof}[1]{
  \ifshowallproofs
  #1
  \fi
}
\newcommand{\IndexName}{\texttt{JITD}\xspace}
\newcommand{\IndexNames}{\texttt{JITD}s\xspace}
\newcommand{\CogName}{\textsc{cog}\xspace}
\newcommand{\ArrayAtom}{\textbf{Array}}
\newcommand{\SortedAtom}{\textbf{Sorted}}
\newcommand{\ConcatAtom}{\textbf{Concat}}
\newcommand{\BTreeAtom}{\textbf{BinTree}}
\newcommand{\SortTransform}{\textbf{Sort}}
\newcommand{\UnsortTransform}{\textbf{UnSort}}
\newcommand{\MergeTransform}{\textbf{Merge}}
\newcommand{\DivideTransform}{\textbf{Divide}}
\newcommand{\CrackTransform}{\textbf{Crack}}
\newcommand{\PivotLeftTransform}{\textbf{PivotLeft}}
\newcommand{\PivotRightTransform}{\textbf{PivotRight}}
\newcommand{\IdentityTransform}{\textbf{id}}
\newcommand{\SortFunction}{\texttt{sort}}
\newcommand{\AtomicTransformSet}{\mathcal A}
\newcommand{\HierarchicalTransforms}{\Delta}
\newcommand{\LHSMetaTransform}{\textbf{LHS}}
\newcommand{\RHSMetaTransform}{\textbf{RHS}}
\newcommand{\PolicyInstance}{\mathcal P}
\newcommand{\PolicyDomain}{\mathcal D}
\newcommand{\PolicyHeuristic}{\mathcal H}
\newcommand{\ScoreFunction}{\texttt{score}}
\newcommand{\WeightedTargets}{\mathcal W}
\newcommand{\RootName}{\texttt{root}}
\newcommand{\TypeOfCogInstance}[1]{\textbf{typeof}(#1)}
\newcommand{\GenericType}{\tau}
\newcommand{\CogType}{\mathcal C}
\newcommand{\KeyType}{\mathcal I}
\newcommand{\TransformType}{\mathcal T}
\newcommand{\RecordType}{\mathcal R}
\newcommand{\ArrayOfType}[1]{\left[ #1 \right]}
\newcommand{\BagOfType}[1]{\left \{\hspace*{-0.5mm}\left|\hspace{0.5mm} #1 \hspace{0.5mm}\right|\hspace*{-0.5mm}\right\}}
\newcommand{\SetOfType}[1]{\left \{ #1 \right \}}
\newcommand{\KeyInstance}{k}
\newcommand{\RecordInstance}{r}
\newcommand{\TransformInstance}{T}
\newcommand{\MetaTransformInstance}{M}
\newcommand{\ArrayOf}[1]{\left[ #1 \right]}
\newcommand{\BagOf}[1]{\left \{\hspace*{-0.5mm}\left|\hspace{0.5mm} #1 \hspace{0.5mm}\right|\hspace*{-0.5mm}\right\}}
\newcommand{\SetOf}[1]{\left \{ #1 \right \}}
\newcommand{\Wildcard}{\;\_\;}
\newcommand{\Comprehension}[2]{\left\{\;{#1}\;|\;{#2}\;\right\}}
\newcommand{\Tuple}[1]{\left<\;{#1}\;\right>}
\newcommand{\KeyOrder}{\preceq}
\newcommand{\KeyOrderStrict}{\prec}
\newcommand{\KeyOrderRev}{\succeq}
\newcommand{\KeyOf}[1]{\texttt{id}\left(#1\right)}
\newcommand{\DescendantsOfCog}[1]{#1^*}
\newcommand{\ContentsOfCog}[1]{\mathbb{D}\left(#1\right)}
\newcommand{\IsStructurallyCorrect}[1]{\textsc{StrCor}\left(#1\right)}
\newcommand{\IsLogicallyEquivalentTo}{\approx}
\newcommand{\FloorOf}[1]{\left \lfloor #1 \right \rfloor}
\newcommand{\IsDefinedAs}{\;\overset{def}{=}\;}
\newcommand{\TransformIsCorrect}[1]{\textsc{Correct}\left(#1\right)}
\newcommand{\TraceOf}{\textbf{Trace}}
\newcommand{\tinysection}[1]{\medskip \noindent \textbf{#1}. }
\newcommand{\trimfigurespacing}{\vspace*{-5mm}}
\begin{document}

\title{Just-in-Time Index Compilation}

\author{
Darshana Balakrishnan, Lukasz Ziarek, Oliver Kennedy\\
\textbf{University at Buffalo}\\
\texttt{\{\;dbalakri, lziarek, okennedy\;\}@buffalo.edu}
}
\date{}

\maketitle

\begin{abstract}
Creating or modifying a primary index is a time-consuming process, as the index typically needs to be rebuilt from scratch.
In this paper, we explore a more graceful ``just-in-time'' approach to index reorganization, where small changes are dynamically applied in the background.
To enable this type of reorganization, we formalize a composable organizational grammar, expressive enough to capture instances of not only existing index structures, but arbitrary hybrids as well.
We introduce an algebra of rewrite rules for such structures, and a framework for defining and optimizing policies for just-in-time rewriting.
Our experimental analysis shows that the resulting index structure is flexible enough to adapt to a variety of performance goals, while also remaining competitive with existing structures like the C++ standard template library map.
\end{abstract}

\section{Introduction}
\label{sec:introduction}

An in-memory index is backed by a data structure that stores and facilitates access to records.
An alphabet soup of such data structures have been developed to date (\cite{DBLP:conf/focs/GuibasS78,DBLP:journals/csur/Comer79,DBLP:conf/dagstuhl/Graefe05,DBLP:conf/iccsa/LeeBB09,DBLP:journals/puc/ZhouTZNW14,DBLP:journals/cacm/Larson88,DBLP:journals/acta/ONeilCGO96,DBLP:conf/sigmod/SearsR12,okasaki1996functional,DBLP:conf/cidr/IdreosKM07} to list only a few).
Each structure targets a specific trade-off between a range of performance metrics (e.g., read cost, write cost), resource constraints (e.g., memory, cache), and supported functionality (e.g. range scans or out-of-core storage).
As a trivial example, contrast linked lists with a sorted arrays: 
The former provides fast writes and slow lookups, while the latter does exactly the opposite.

\begin{figure}
\centering
\includegraphics[width=0.8\columnwidth]{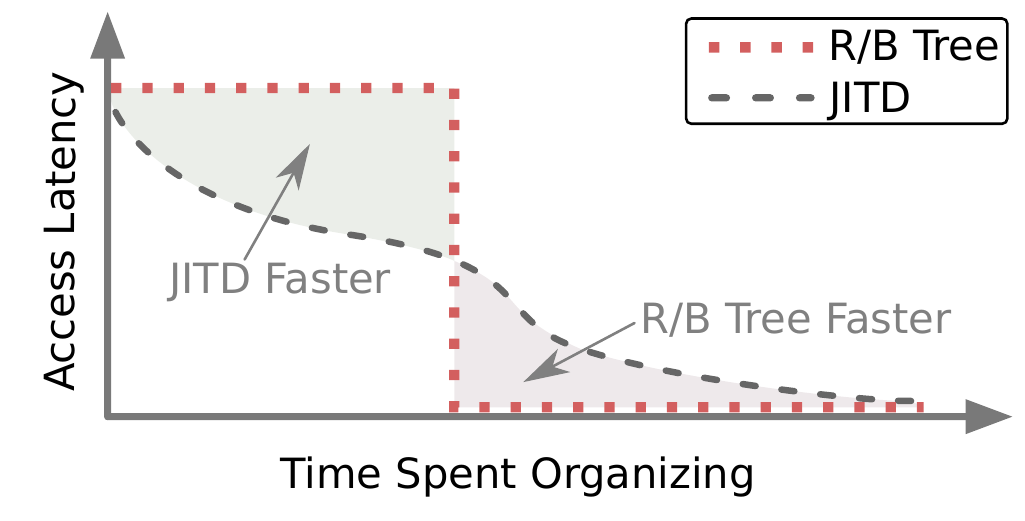}
\caption{A classical index data structure provides no benefits until ready, while \IndexNames provide continuous incremental performance improvements.}
\label{fig:scalingComparison}
\trimfigurespacing
\end{figure}

Creating or modifying an in-memory index is a time-consuming process, since the data structure backing the index typically needs to be rebuilt from scratch when its parameters change.
During this time, the index is unusable,  penalizing the performance of any database relying on it.
In this paper, we propose a more graceful approach to runtime index adaptation.
Just-in-Time Indexes (\IndexNames) continuously make small, incremental reorganizations in the background, while client threads continue to access the structure.
Each reorganization brings the \IndexName closer to a state that mimicks a specific target data structure.
As illustrated in Figure~\ref{fig:scalingComparison}, the performance of a \IndexName continuously improves as it transitions from one state to another, while other data structures improve only after fixed investments of organizational effort.

Three core challenges must be addressed to realize \IndexNames.
First, because each individual step is small, at any given point in time an \IndexName may need to be in some intermediate state between two classical data structures.
For example, an \IndexName transitioning from a linked list to a binary tree may need to occupy a state that is neither linked list, nor binary tree, but some combination of the two.
Second, there may be multiple pathways to transition from a given source state to the desired target state.
For example, to get from an unsorted array to a sorted array, we might sort the array (faster in the long-term) or crack~\cite{DBLP:conf/cidr/IdreosKM07} the array (more short-term benefits).  
Finally, we want to avoid blocking client access to the \IndexName while it is being reorganized.
Client threads should be able to query the structure while the background thread works.

We address the first challenge by building on prior work with \IndexNames~\cite{DBLP:conf/cidr/KennedyZ15}, where we defined them as a form of adaptive index that dynamically assembles indexes from composable, immutable building blocks.
Mimicking the behavior of a just-in-time compiler, a just-in-time data structure dynamically reorganizes building blocks to improve index performance.
Our main contributions in this paper address the remaining challenges.

We first precisely characterize the space of available state transitions by formalizing the behavior of \IndexNames into a composable organizational grammar (\CogName).
A sentence in \CogName corresponds directly to a specific physical layout.
Many classical data structures like binary trees, linked lists, and arrays are simply syntactic restrictions on \CogName.
Lifting these restrictions allows intermediate hybrid structures that combine elements of each.
Thus, the grammar can precisely characterize any possible state of a \IndexName.

Next, we define \emph{transforms}, syntactic rewrite rules over \CogName and show how these rewrite rules can be combined into a \emph{policy} that dictate how and where transforms should be applied.
This choice generally requires runtime decisions, so we identify a specific family of ``local hierarchical'' policies in which runtime decisions can be implemented by an efficiently maintainable priority heap.
As an example, we define a family of policies for transitioning between unsorted and sorted arrays (e.g., for interactive analysis on a data file that has just been loaded~\cite{DBLP:conf/sigmod/AlagiannisBBIA12}).

To automate policy design, we provide a simulator framework that predictively models the performance of a \IndexName under a given policy.  
The simulator can generate performance-over-time curves for a set of potential policies.
These curves can then be queried to find a policy that best satisfies user desiderata like ``get to $300ms$ lookups as soon as possible'' or ``give me the best scan performance possible within $5s$''.

Finally, we address the issue of concurrency by proposing a new form of ``semi-functional'' data structure.  
Like a functional (immutable) data structure, elements of a semi-functional data structure are stable once created.  
However, using handle-style~\cite{handles} pointer indirection, we draw a clear distinction between code that expects physical stability and code that merely expects logical stability.
In the latter case correctness is preserved even if the element is modified, so long as the element's logical content remains unchanged.


\subsection{System Overview}
\label{sec:overview}

\begin{figure}
\centering
\includegraphics[width=0.8\columnwidth]{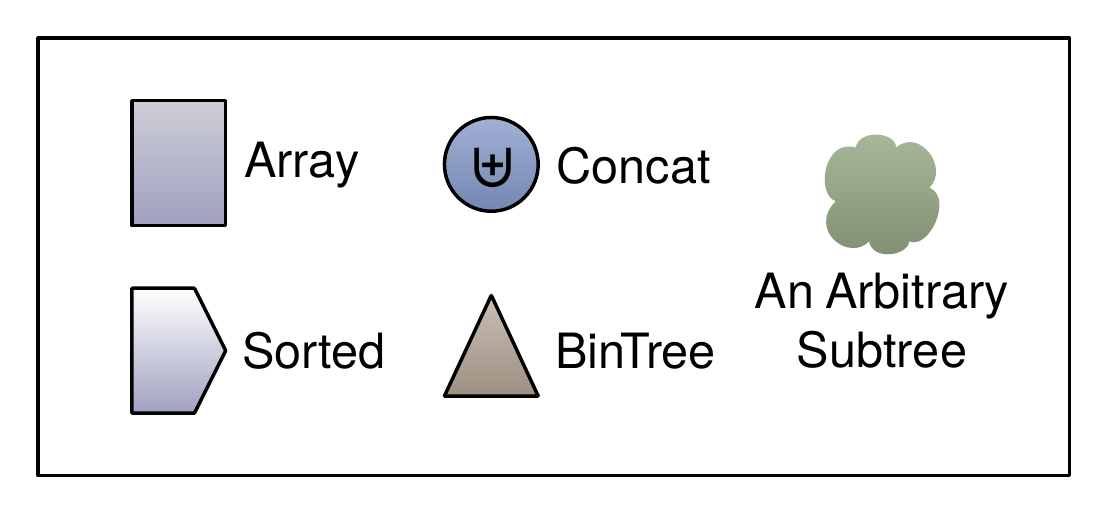}
\vspace*{-5mm}
\caption{Node types in a \IndexName}
\label{fig:nodekey}
\trimfigurespacing
\end{figure}

A Self-Adjusting Index (\IndexName) is a key-value style primary (clustered) index storing a collection of records, each (non-uniquely) identified by a key with a well defined sort order.
As illustrated in Figure~\ref{fig:system}, a \IndexName consists of three parts: an index, an optimizer, and a policy simulator.
The \IndexName's index is a tree rooted at a node designated \RootName.
Following just-in-time data structures, \IndexNames use four types of nodes, summarized in Figure~\ref{fig:nodekey}: 
(1)~\ArrayAtom: A leaf node storing an unsorted array of records,
(2)~\SortedAtom: A leaf node storing a \emph{sorted} array of records,
(3)~\ConcatAtom: An inner node pointing to two additional nodes, and
(4)~\BTreeAtom: A binary tree node that segments records in the two nodes it points to by a separator value.

The second component of \IndexName is a just-in-time optimizer, an asynchronous process that incrementally reorganizes the index, progressively rewriting its component parts to adapt it to the currently running workload.
These rewrites are guided by a \emph{policy}, a set of rules for identifying points in the index to be rewritten and for determining what rewrites to apply.
To help users to select an appropriate policy, \IndexName includes a policy simulator that generates predicted performance over time curves for specific policies.
This simulator can be used to quickly compare policies, helping users to select the policy that best meets the user's requirements for latency, preparation time, or throughput.

\begin{figure}
\centering
\includegraphics[width=0.7\columnwidth]{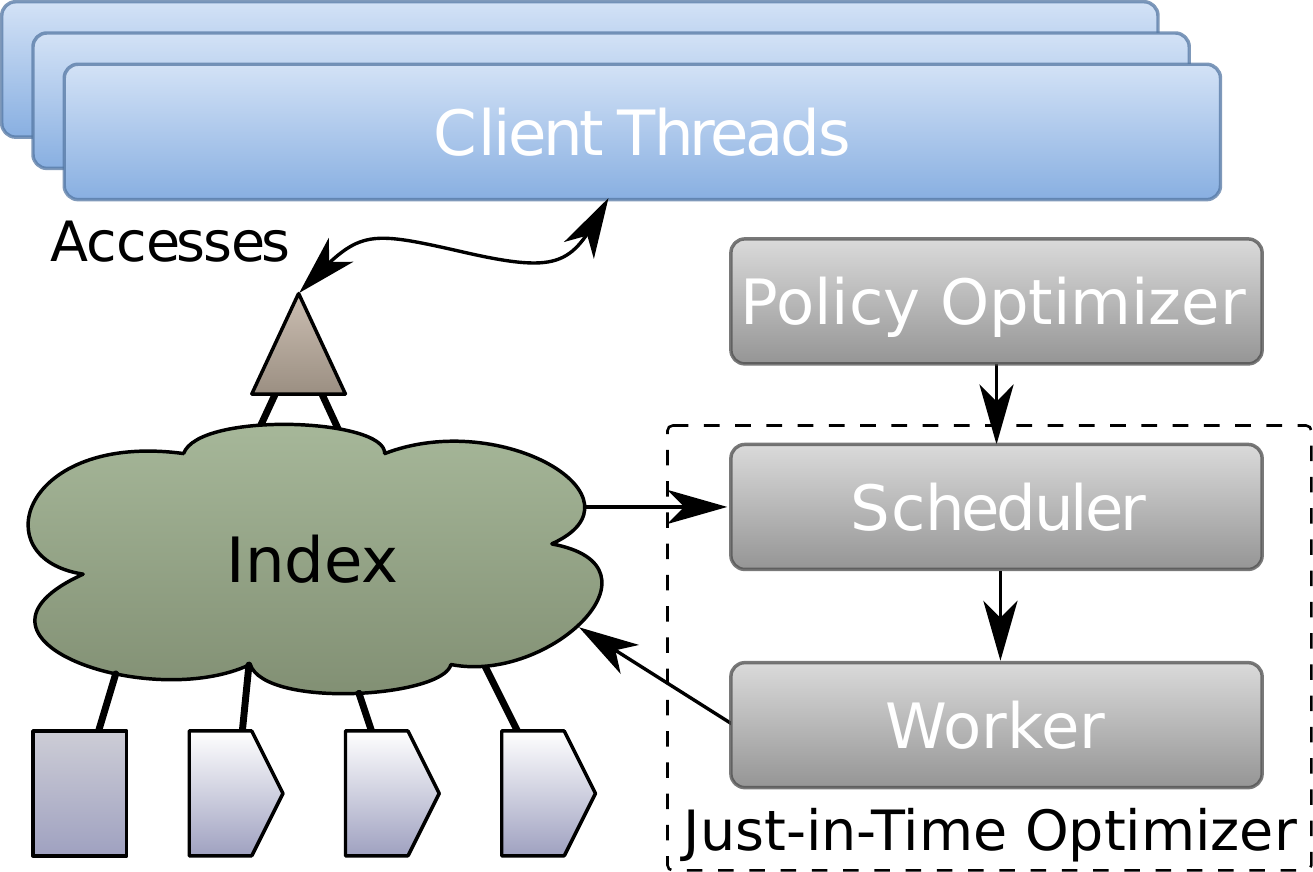}
\caption{A \IndexName}
\label{fig:system}
\trimfigurespacing
\end{figure}

\subsection{Access Paths}
A \IndexName provides lock-free access to its contents through \emph{access paths} that recursively traverse the index: 
(1) \texttt{get(key)} returns the first record with a target key,
(2) \texttt{iterator(lower)} returns an un-ordered iterator over records with keys greater than or equal to \texttt{lower}, and 
(3) \texttt{ordered\_iterator(lower)} returns an iterator over the same records, but in key order.
As an example, Algorithm~\ref{alg:get} implements the first of these access paths by recursively descending through the index.
Semantic constraints on the layout provided by \SortedAtom{} and \BTreeAtom{} are exploited where they are available.

\begin{algorithm}
\begin{algorithmic}
\caption{\texttt{Get($C$, $\KeyInstance$)}}
\label{alg:get}
\REQUIRE \texttt{$C$}: A \IndexName node
\hspace{10mm} \texttt{$\KeyInstance$}: A key
\ENSURE \texttt{$\RecordInstance$}: A record with key $\KeyInstance$ or $\textbf{None}$ if none exist.
\IF{$C \textbf{ matches } \ArrayAtom(\vec \RecordInstance)$}
  \STATE \textbf{return} $\texttt{linearScan}(\KeyInstance, \vec \RecordInstance)$
\ELSIF{$C \textbf{ matches } \SortedAtom(\vec \RecordInstance)$}
  \STATE \textbf{return} $\texttt{binarySearch}(\KeyInstance, \vec \RecordInstance)$
\ELSIF{$C \textbf{ matches } \ConcatAtom(C_1, C_2)$}
  \STATE $\RecordInstance = \texttt{Get($C_1$, \KeyInstance)}$
  \STATE \textbf{if} $\RecordInstance \neq \textbf{None}$ \textbf{then return} $\RecordInstance$
  \STATE \textbf{else return} $\texttt{Get($C_2$, \KeyInstance)}$
\ELSIF{$C \textbf{ matches } \BTreeAtom(\KeyInstance', C_1, C_2)$}
  \STATE \textbf{if} $\KeyInstance' \KeyOrder \KeyInstance$ \textbf{then return} $\texttt{Get($C_2$, \KeyInstance)}$
  \STATE \textbf{else return} $\texttt{Get($C_1$, \KeyInstance)}$
\ENDIF
\end{algorithmic}
\end{algorithm}
\trimfigurespacing

\subsection{Updates}
Organizational effort in a \IndexName is entirely offloaded to the just-in-time optimizer.
Client threads performing updates do the minimum work possible to register their changes.
To insert, the updating thread instantiates a new \ArrayAtom{} node $C$ and creates a subtree linking it and the current index root: 
\vspace*{-2mm}
$$\ConcatAtom(\RootName, C)\vspace*{-2mm}$$
This subtree becomes a new version of the root.
Although only one thread may update the index at a time, updates can proceed concurrently with the background worker thread.
This is achieved through a layer of indirection called a \emph{handle} that 
we introduce and discuss further in \Cref{sec:implementation}.

\subsection{Organization and Policy}
The background worker thread is responsible for iteratively rewriting fragments of the index into (hopefully) more efficient forms.
It needs (1) to identify fragments of the structure that need to be rewritten, (2) to decide how to rewrite those fragments, and (3) to decide how to prioritize these tasks.
We address the first two challenges by defining a fixed set of transformations for \IndexName.  
Like rewrite rules in an optimizing compiler, transformations replace subtrees of the grammar with logically equivalent structures.
Following this line of thought, we first develop a formalism that treats the state of the index at any point in time as a sentence in a grammar over the four node types.
We show that transformations can be expressed as structural rewrites over this grammar, and that for any sentence (i.e., index instance) we can enumerate the sentence fragments to which a transformation can be successfully applied.
A policy that balances the trade-offs between different types of transformations is then defined to prioritize which transforms should be applied and when.  

\subsection{Paper Outline}

\noindent The remainder of this paper is organized as follows.

\tinysection{Encoding hybrid index structures}
In Section~\ref{sec:grammar}, we introduce and formalize the \CogName grammar and show how it allows us to encode a wide range of tree-structured physical data layouts.
These include restricted sub-grammars that capture, for example, singly linked lists or binary trees.  
The grammar can express transitional physical layouts that combine elements of multiple classes of data structure.

\tinysection{Data structure transitions as an algebra}
We next outline an algebra over the \CogName grammar in Section~\ref{sec:transformations}.  
Specifically we introduce the concept of transforms, rewrites on the structure of a sentence in \CogName that preserve logical equivalence and syntactic constraints over the structure.

\tinysection{Combining transforms into a policy}
Next, in Section~\ref{sec:policies}, we show how sequences of transforms, guided by a policy, may be used to incrementally re-organize an index.  
In order to remain competitive with classical index structures, policies need to make split-second decisions on which transforms to apply.
Accordingly, we identify a specific class of local hierarchical policies that can be implemented via an incrementally maintained priority queue that tracks organizational goals and efficiently selects transforms.

\tinysection{Implementation and runtime}
After providing a theoretical basis for \IndexNames, we describe how we addressed key challenges in implementing them, outline the primary components of the \IndexName runtime, and provide an illustrative example policy: Crack-or-Sort.

\tinysection{Policy optimization}
Section~\ref{sec:model} introduces a \IndexName simulator. This simulator emulates the evolution of a \IndexName, allowing us to efficiently determine which of a range of alternative policies best meets user-provided performance goals for transitioning between index structures.

\tinysection{Assessing \IndexName's generality}
Section~\ref{sec:discussion} uses a taxonomy of index data structures proposed in \cite{DBLP:conf/sigmod/IdreosZHKG18} to evaluate \IndexName's generality.  
We propose three ideas for future work that could fully generalize 19 of the 22 design dimensions identified.

\tinysection{Evaluation}
Finally, in Section~\ref{sec:evaluation}, we assess the performance overheads \IndexNames, relative to both commonly used and state-of-the art in-memory indexes.






\section{A Grammar of Data Structures}
\label{sec:grammar}

Each record $\RecordInstance \in \RecordType$ is accessed exclusively by a (potentially non-unique) identifier $\KeyOf{\RecordInstance} \in \KeyType$.
We assume a total order $\KeyOrder$ is defined over elements of $\KeyType$.
We abuse syntax and use records and keys interchangeably with respect to the order, writing $\RecordInstance \KeyOrder \KeyInstance$ to mean $\KeyOf{\RecordInstance} \KeyOrder \KeyInstance$.
We write $\ArrayOfType{\GenericType}$, $\SetOfType{\GenericType}$, and $\BagOfType{\GenericType}$ to denote the type of arrays, sets, and bags (respectively) with elements of type $\GenericType$.
We write $\ArrayOf{\RecordInstance_1, \ldots, \RecordInstance_N}$ (resp., $\SetOf{\ldots}$, $\BagOf{\ldots}$) to denote an array (or set or bag) with elements $\RecordInstance_1, \ldots, \RecordInstance_N$.

To support incremental index transitions, we need a way to represent intermediate states of an index, part way between one physical layout and another.
In this section we propose a compositional organizational grammar (\CogName) that will allow us to reason about the state of a \IndexName, and the correctness of its state transitions.

\subsection{Notation and Definitions}

The atoms of \CogName are defined by four symbols $\ArrayAtom$, $\SortedAtom$, $\ConcatAtom$, $\BTreeAtom$.
A \CogName \emph{instance} is a sentence in \CogName, defined by the grammar $\CogType$ as follows:
\begin{eqnarray*}
\CogType &=&       \ArrayAtom(\ArrayOfType{\RecordType})
         \;\;|\;\; \SortedAtom(\ArrayOfType{\RecordType})\\
         &|&       \ConcatAtom(\CogType, \CogType)
         \;\;|\;\; \BTreeAtom(\KeyType, \CogType, \CogType)
\end{eqnarray*}

Atoms in \CogName map directly to physical building blocks of a data structure, while atom instances correspond to instances of a data structure or one of its sub-structures.
For example an instance of $\ArrayAtom$ represents an array of records laid out contiguously in memory, while $\ConcatAtom$ represents a tuple of pointers referencing other instances.
We write $\TypeOfCogInstance{C}$ to denote the atom symbol at the root of an instance $C \in \CogType$.

\begin{example}[Linked List]
\label{ex:linkedlist}
A linked list may be defined as a syntactic restriction over \CogName as follows
\begin{eqnarray*}
\mathcal{LL} &=& \ConcatAtom(\ArrayAtom(\ArrayOfType{\RecordType}), \mathcal{LL}) \;\;|\;\; \ArrayAtom(\ArrayOfType{\RecordType})
\end{eqnarray*}
A linked list is either a concatenation of an array (with one element by convention), and a pointer to the next element, or a terminal array (with no elements by convention).
\end{example}

Two different instances, corresponding to different representations may still encode the same data.
We describe the logical contents of an instance $C$ as a bag, denoted by $\ContentsOfCog{C}$, and use this term to define logical equivalence between two instances.
{\small
\begin{eqnarray*}
\ContentsOfCog{C} = \begin{cases}
\BagOf{\RecordInstance_1, \ldots, \RecordInstance_N} & \textbf{if } C = \ArrayAtom(\ArrayOf{\RecordInstance_1, \ldots, \RecordInstance_N})\\
\BagOf{\RecordInstance_1, \ldots, \RecordInstance_N} & \textbf{if } C = \SortedAtom(\ArrayOf{\RecordInstance_1, \ldots, \RecordInstance_N})\\
\ContentsOfCog{C_1} \uplus \ContentsOfCog{C_2} & \textbf{if } C = \ConcatAtom(C_1, C_2)\\
\ContentsOfCog{C_1} \uplus \ContentsOfCog{C_2} & \textbf{if } C = \BTreeAtom(\Wildcard, C_1, C_2)\\
\end{cases}
\end{eqnarray*}
}

\begin{definition}[Logical Equivalence]
\label{def:logicalEquivalence}
Two instances $C_1$ and $C_2$ are \emph{logically equivalent} if and only if $\ContentsOfCog{C_1} = \ContentsOfCog{C_2}$.  To denote logical equivalence we write $C_1 \IsLogicallyEquivalentTo C_2$.
\end{definition}

We write $\DescendantsOfCog{C}$ to denote the bag consisting of the instance $C$ and its descendants.
{\small
$$
\DescendantsOfCog{C} = 
\begin{cases}
\DescendantsOfCog{C_1} \uplus \DescendantsOfCog{C_2} \uplus \BagOf{C}  & \textbf{if } C = \ConcatAtom(C_1, C_2)\\
\DescendantsOfCog{C_1} \uplus \DescendantsOfCog{C_2} \uplus \BagOf{C}  & \textbf{if } C = \BTreeAtom(\Wildcard, C_1, C_2)\\
\BagOf{C} & \textbf{otherwise}
\end{cases}
$$
}

\begin{proposition}
\label{thm:descendantsOfCAreFinite}
The set $C^*$ is finite for any $C$.
\end{proposition}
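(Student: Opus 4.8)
The plan is to prove the claim by structural induction on the grammar $\CogType$, which is exactly the induction principle justified by the fact that $\CogType$ is defined inductively. The one point that deserves care up front is \emph{well-foundedness}: every instance $C \in \CogType$ is built by finitely many applications of the four productions, so each $C$ is a finite derivation tree and the subterm relation on $C$ is well-founded. I would state this explicitly first, since it is what licenses the induction and is the only place where anything could go wrong.

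With that in place, the base cases are immediate. When $\TypeOfCogInstance{C} \in \SetOf{\ArrayAtom, \SortedAtom}$, the defining equation for $\DescendantsOfCog{C}$ takes the \textbf{otherwise} branch and yields $\DescendantsOfCog{C} = \BagOf{C}$, a bag of total multiplicity one and hence finite. I would note that the length of the underlying record array is irrelevant, as $\DescendantsOfCog{C}$ enumerates instances (nodes) rather than records, so a leaf always contributes exactly one element regardless of how many records it stores.

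For the inductive step, suppose $C = \ConcatAtom(C_1, C_2)$ or $C = \BTreeAtom(\Wildcard, C_1, C_2)$. In both cases the recursive definition gives $\DescendantsOfCog{C} = \DescendantsOfCog{C_1} \uplus \DescendantsOfCog{C_2} \uplus \BagOf{C}$, and $C_1, C_2$ are proper subterms of $C$, so the induction hypothesis applies and both $\DescendantsOfCog{C_1}$ and $\DescendantsOfCog{C_2}$ are finite. A disjoint union of finitely many finite bags is finite; explicitly $|\DescendantsOfCog{C}| = |\DescendantsOfCog{C_1}| + |\DescendantsOfCog{C_2}| + 1$. This closes the induction.

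The cardinality arithmetic is entirely routine, so I expect the only real obstacle to be the preliminary well-foundedness step. If infinite sentences were admitted, for instance a linked list in the sense of Example~\ref{ex:linkedlist} whose recursion never bottoms out in the terminal array, then $\DescendantsOfCog{C}$ would be an infinite bag and the proposition would fail; making precise that $\CogType$ contains only finite trees is therefore the crux of the argument.
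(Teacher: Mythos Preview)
Your structural induction is correct and is the natural way to establish the claim. The paper itself states this proposition without proof, treating it as self-evident from the inductive definition of $\CogType$; your argument simply makes explicit the well-foundedness of the subterm relation and the routine cardinality arithmetic $|\DescendantsOfCog{C}| = |\DescendantsOfCog{C_1}| + |\DescendantsOfCog{C_2}| + 1$ that the paper leaves implicit.
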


\subsection{\CogName Semantics}

$\ArrayAtom$ and $\ConcatAtom$ represent the physical layout of elements of a data structure.
The remaining two atoms provide provide semantic constraints (using the identifier order $\KeyOrder$) over the physical layout that can be exploited to make the structure more efficient to query.
We say that instances satisfying these constraints are \emph{structurally correct}.

\begin{definition}[Structural Correctness]
\label{def:structuralcorrectness}
We define the structural correctness of an instance $C \in \CogType$ (denoted by the unary relation $\IsStructurallyCorrect{C}$) for each atom individually:
\begin{enumerate}[label=\textbf{Case \arabic*}.,itemindent=*,leftmargin=13mm,labelwidth=12mm]
  \item $\ArrayAtom$ instance is structurally correct.
  \item The instance $\ConcatAtom(C_1, C_2)$ is structurally correct if and only if $C_1$ and $C_2$ are both structurally correct.
  \item \label{def:sortedIsStructurallyCorrect} The instance $\SortedAtom(\ArrayOf{r_1, \ldots, r_N})$ is structurally correct if and only if $\forall 0 \leq i < j \leq N \;:\; \RecordInstance_i \KeyOrder \RecordInstance_j$
  \item \label{def:btreeIsStructurallyCorrect} The instance $\BTreeAtom(\KeyInstance, C_1, C_2)$ is structurally correct if and only if both $C_1$ and $C_2$ are structurally correct, and if $\forall \RecordInstance_1 \in \ContentsOfCog{C_1} : \RecordInstance_1 \KeyOrderStrict \KeyInstance$ and $\RecordInstance_2 \in \ContentsOfCog{C_2}\;:\; \KeyInstance \KeyOrder \RecordInstance_2 $.
\end{enumerate}
\end{definition}

In short, $\SortedAtom$ is structurally correct if it represents a sorted array.
Similarly, $\BTreeAtom$ is structurally correct if it corresponds to a binary tree node, with its children partitioned by its identifier.  Both $\ConcatAtom$ and $\BTreeAtom$ additionally require that their children be structurally correct.

\begin{example}[Binary Tree]
A binary tree may be defined as a syntactic restriction over \CogName as follows
\begin{eqnarray*}
\mathcal{B} &=& \BTreeAtom(\KeyType, \mathcal{B}, \mathcal{B}) \;\;|\;\; \ArrayAtom(\ArrayOfType{\RecordType})
\end{eqnarray*}
A binary tree is a hierarchy of $\BTreeAtom$ inner nodes, over $\ArrayAtom$ leaf nodes (containing one element by convention).
\end{example}

\section{Transforms over \CogName}
\label{sec:transformations}

\begin{figure*}
{\footnotesize
\begin{eqnarray*}
\hspace*{0.5\textwidth}
\UnsortTransform(C) &=& \begin{cases}
    \ArrayAtom(\;\vec r\;)
    & \textbf{if } C = \SortedAtom(\;\vec r\;)\\
    C & \textbf{otherwise}
  \end{cases}\\
\end{eqnarray*}
\vspace*{-20mm}
\begin{eqnarray*}
\SortTransform(C) &=& \begin{cases}
    \SortedAtom(\SortFunction(\;\vec r\;))
    & \textbf{if } C = \ArrayAtom(\;\vec r\;)\\
    C & \textbf{otherwise}
  \end{cases}\hspace*{0.5\textwidth}\\
\DivideTransform(C) &=& \begin{cases}
    \ConcatAtom(\ArrayAtom(\ArrayOf{\RecordInstance_1 \ldots \RecordInstance_{\FloorOf{\frac{N}{2}}}}), \ArrayAtom(\ArrayOf{\RecordInstance_{\FloorOf{\frac{N}{2}}+1} \ldots \RecordInstance_N}))
    & \textbf{if } C = \ArrayAtom(\ArrayOf{\RecordInstance_1 \ldots \RecordInstance_N})\\
    C & \textbf{otherwise}
  \end{cases}\\
\CrackTransform(C) &=& \begin{cases}
    \BTreeAtom(\KeyOf{\RecordInstance_{\FloorOf{\frac{N}{2}}}},
      \ArrayAtom(\ArrayOf{\RecordInstance_i \;\big |\; \RecordInstance_i \KeyOrderStrict \RecordInstance_{\FloorOf{\frac{N}{2}}}}), 
      \ArrayAtom(\ArrayOf{\RecordInstance_i \;\big |\; \RecordInstance_{\FloorOf{\frac{N}{2}}} \KeyOrder \RecordInstance_i})
    & \textbf{if } C = \ArrayAtom(\ArrayOf{\RecordInstance_1 \ldots \RecordInstance_N}))\\
    C & \textbf{otherwise}    
  \end{cases}\\
\MergeTransform(C) &=& \begin{cases}
    \ArrayAtom(\ArrayOf{\RecordInstance_1 \ldots \RecordInstance_N, \RecordInstance_{N+1} \ldots \RecordInstance_M})
    & \textbf{if } C = \ConcatAtom(\ArrayAtom(\ArrayOf{\RecordInstance_1 \ldots \RecordInstance_N}), \ArrayAtom(\ArrayOf{\RecordInstance_{N+1} \ldots \RecordInstance_M}))\\
    \ArrayAtom(\ArrayOf{\RecordInstance_1 \ldots \RecordInstance_N, \RecordInstance_{N+1} \ldots \RecordInstance_M})
    & \textbf{if } C = \BTreeAtom(\Wildcard, \ArrayAtom(\ArrayOf{\RecordInstance_1 \ldots \RecordInstance_N}), \ArrayAtom(\ArrayOf{\RecordInstance_{N+1} \ldots \RecordInstance_M}))\\
    C & \textbf{otherwise}
  \end{cases}\\
\PivotLeftTransform(C) &=& \begin{cases}
    \ConcatAtom(\ConcatAtom(C_1, C_2), C_3)
      & \textbf{if } C = \ConcatAtom(C_1, \ConcatAtom(C_2, C_3))\\
    \BTreeAtom(\KeyInstance_2, \BTreeAtom(\KeyInstance_1, C_1, C_2), C_3)
      & \textbf{if } C = \KeyInstance_1 \KeyOrderStrict \KeyInstance_2 \textbf{ and } \BTreeAtom(\KeyInstance_1, C_1, \BTreeAtom(\KeyInstance_2, C_2, C_3))\\
    C & \textbf{otherwise}
  \end{cases}
\end{eqnarray*}
}
\caption{
  Examples of \emph{correct} transforms.  $\SortTransform$ and $\UnsortTransform$ convert between $\ArrayAtom$ and $\SortedAtom$ and visa versa. $\CrackTransform$ and $\DivideTransform$ both fragment $\ArrayAtom$s, and both are reverted by $\MergeTransform$.  $\CrackTransform$ in particular uses an arbitrary array element to partition its input value (the $\frac{N}{2}$th element in this example), analogous to the RadixCrack operation of \cite{DBLP:journals/pvldb/IdreosMKG11}.  $\PivotLeftTransform$ rotates tree structures counterclockwise and a symmetric $\PivotRightTransform$ may also be defined.
  The function $\SortFunction : \ArrayOf{\RecordType} \rightarrow \ArrayOf{\RecordType}$ returns a transposition of its input sorted according to $\KeyOrder$.
}
\label{fig:transformExamples}
\end{figure*}

We next formalize state transitions in a \IndexName through pattern-matching rewrite rules over \CogName called transforms.

\begin{definition}[Transform] 
We define a transform $\TransformInstance$ as any member of the family $\TransformType$ of endomorphisms over \CogName instances.  Equivalently, any transform $\TransformInstance \in \TransformType$ is a morphism $\TransformInstance : \CogType \rightarrow \CogType$ from instance to instance.
\end{definition}

Figure~\ref{fig:transformExamples} illustrates a range of common transforms that correspond to common operations on index structures.
For consistency, we define transforms over all instances and not just instances where the operation ``makes sense.''
On other instances, transforms behave as the identity ($\IdentityTransform(C) = C$).

Clearly not all possible transforms are useful for organizing data.
For example, the well defined, but rather unhelpful transform $\textbf{Empty}(C) = \ArrayAtom(\ArrayOf{})$ transforms any \CogName instance into an empty array.
To capture this notion of a ``useful'' transform, we define two correctness properties: structure preservation and equivalence preservation.

\begin{definition}[Equivalence Preserving Transforms]
A transform $\TransformInstance$ is defined to be \emph{equivalence preserving} if and only if  $\forall C : C \IsLogicallyEquivalentTo \TransformInstance(C)$ (Definition~\ref{def:logicalEquivalence}).
\end{definition}
\begin{definition}[Structure Preserving Transforms]
A transform $\TransformInstance$ is defined to be \emph{structure preserving} if and only if $\forall C : \IsStructurallyCorrect{C}\implies \IsStructurallyCorrect{\TransformInstance(C)}$ (Definition~\ref{def:structuralcorrectness}).
\end{definition}

A transform is equivalence preserving if it preserves the logical content of the instance.  It is structure preserving if it preserves the structure's semantic constraints (e.g., the record ordering constraint on instances of the $\SortedAtom$ atom).  If it is both, we say that the transform is correct.

\begin{definition}[Correct Transform]
\label{def:correctTransform}
We define a transform $\TransformInstance$ to be \emph{correct} (denoted $\TransformIsCorrect{\TransformInstance}$) if $\TransformInstance$ is both structure and equivalence preserving.
\end{definition}

In \Cref{apx:ExampleTransformCorrectness} we give proofs of correctness for each of the transforms in \Cref{fig:transformExamples}.



\subsection{Meta Transforms}

Transforms such as those illustrated in Figure~\ref{fig:transformExamples} form the atomic building blocks of a policy for re-organizing data structures.  
For the purposes of this paper, we refer to these six transforms, together with $\PivotRightTransform$ and the identity transform $\IdentityTransform$, collectively as the \emph{atomic transforms}, denoted $\AtomicTransformSet$.
We next introduce a framework for constructing more complex transforms from these building blocks.

\begin{definition}[Composition]
For any two transforms $\TransformInstance_1, \TransformInstance_2 \in \TransformType$, we denote by $\TransformInstance_1 \circ \TransformInstance_2$ the composition of $\TransformInstance_1$ and $\TransformInstance_2$:
$$(\TransformInstance_1 \circ \TransformInstance_2)(C) \IsDefinedAs \TransformInstance_2(\TransformInstance_1(C))$$
\end{definition}

Transform composition allows us to build more complex transforms from the set of atomic transforms.  We also consider \emph{meta transforms} that manipulate transform behavior. 

\begin{definition}[Meta Transform]
A meta transform $\MetaTransformInstance$ is any correctness-preserving endofunctor over the set of transforms.
That is, any functor $\MetaTransformInstance : \TransformType \rightarrow \TransformType$ is a meta transform if and only if $\forall \TransformInstance \in \TransformType : \TransformIsCorrect{\TransformInstance} \implies \TransformIsCorrect{\MetaTransformInstance[\TransformInstance]}$ (Definition~\ref{def:correctTransform}).
\end{definition}

\noindent We are specifically interested in two meta transforms that will allow us to apply transforms not just to the root of an instance, but to any of its descendants as well.
\resizebox{\columnwidth}{!}{
$\LHSMetaTransform[\TransformInstance](C) = \begin{cases}
\ConcatAtom(\TransformInstance(C_1), C_2) 
  & \textbf{if } C = \ConcatAtom(C_1, C_2)\\
\BTreeAtom(\KeyInstance, \TransformInstance(C_1), C_2) 
  & \textbf{if } C = \BTreeAtom(\KeyInstance, C_1, C_2)\\
C & \textbf{otherwise}
\end{cases}$
}
\resizebox{\columnwidth}{!}{
$\RHSMetaTransform[\TransformInstance](C) = \begin{cases}
\ConcatAtom(C_1, \TransformInstance(C_2)) 
  & \textbf{if } C = \ConcatAtom(C_1, C_2)\\
\BTreeAtom(\KeyInstance, C_1, \TransformInstance(C_2)) 
  & \textbf{if } C = \BTreeAtom(\KeyInstance, C_1, C_2)\\
C & \textbf{otherwise}
\end{cases}$
}
\begin{theorem}[$\LHSMetaTransform$ and $\RHSMetaTransform$ are meta transforms]
\label{thm:lhsrhsAreMeta}
$\LHSMetaTransform$ and $\RHSMetaTransform$ are correctness-preserving endofunctors over $\TransformType$.
\end{theorem}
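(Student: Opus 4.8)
The plan is to discharge the two obligations packaged in the theorem: that $\LHSMetaTransform$ and $\RHSMetaTransform$ are endofunctors $\TransformType \to \TransformType$, and that they are correctness-preserving. Since $\RHSMetaTransform$ is the mirror image of $\LHSMetaTransform$ (rewriting the right child rather than the left), I would prove everything for $\LHSMetaTransform$ and remark that the $\RHSMetaTransform$ argument is symmetric. A useful observation up front is that no induction is required: both meta transforms are defined by a single case split on the root of their argument $C$, rewriting at most the immediate child while leaving the sibling and the separator key untouched. Consequently each property reduces to a one-level case analysis on $\TypeOfCogInstance{C}$, with the real work delegated to the assumed correctness of the inner transform $\TransformInstance$.

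For the endofunctor obligation, I would first note that for any $\TransformInstance \in \TransformType$ the defining case split of $\LHSMetaTransform[\TransformInstance]$ is total and returns a \CogName instance on every input, so $\LHSMetaTransform[\TransformInstance] : \CogType \to \CogType$ is again a transform and $\LHSMetaTransform$ maps $\TransformType$ into itself. Treating $\TransformType$ as a one-object category under composition, I would then verify the two functor laws by the same case split: $\LHSMetaTransform[\IdentityTransform] = \IdentityTransform$, because substituting $\IdentityTransform$ into the left child leaves $C$ unchanged, and $\LHSMetaTransform[\TransformInstance_1 \circ \TransformInstance_2] = \LHSMetaTransform[\TransformInstance_1] \circ \LHSMetaTransform[\TransformInstance_2]$, because in the $\ConcatAtom$ and $\BTreeAtom$ cases both sides rewrite the left child to $\TransformInstance_2(\TransformInstance_1(C_1))$ and agree everywhere else.

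The correctness-preservation obligation splits into equivalence preservation and structure preservation (Definition~\ref{def:correctTransform}). Equivalence preservation is the easier half: assuming $\TransformInstance$ is equivalence preserving, I would case on $C$. In the $\ArrayAtom$ and $\SortedAtom$ cases $\LHSMetaTransform[\TransformInstance](C) = C$ and there is nothing to show; in the $\ConcatAtom(C_1,C_2)$ and $\BTreeAtom(\KeyInstance,C_1,C_2)$ cases, unfolding $\ContentsOfCog{\cdot}$ yields $\ContentsOfCog{\TransformInstance(C_1)} \uplus \ContentsOfCog{C_2}$, and since $\TransformInstance$ is equivalence preserving we have $\ContentsOfCog{\TransformInstance(C_1)} = \ContentsOfCog{C_1}$, which recombines to $\ContentsOfCog{C}$. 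Here I would stress that the separator key plays no role, as it is discarded by the definition of $\ContentsOfCog{\cdot}$ on $\BTreeAtom$.

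Structure preservation is where the main obstacle lies, and it is the step I expect to require the most care. Assuming $\IsStructurallyCorrect{C}$, the $\ConcatAtom$ case is routine: by Case 2 of Definition~\ref{def:structuralcorrectness} both $C_1$ and $C_2$ are structurally correct, so structure preservation of $\TransformInstance$ gives $\IsStructurallyCorrect{\TransformInstance(C_1)}$ and Case 2 reassembles $\ConcatAtom(\TransformInstance(C_1),C_2)$. The delicate case is $\BTreeAtom(\KeyInstance,C_1,C_2)$: beyond the structural correctness of the children, Case 4 also imposes the separator constraint $\forall \RecordInstance_1 \in \ContentsOfCog{C_1} : \RecordInstance_1 \KeyOrderStrict \KeyInstance$, and to conclude $\IsStructurallyCorrect{\BTreeAtom(\KeyInstance, \TransformInstance(C_1), C_2)}$ I must re-establish this constraint for $\ContentsOfCog{\TransformInstance(C_1)}$. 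Structure preservation alone does not suffice, since it says nothing about how $\TransformInstance$ relocates records relative to $\KeyInstance$; the key insight is that equivalence preservation forces $\ContentsOfCog{\TransformInstance(C_1)} = \ContentsOfCog{C_1}$, so the separator constraint transfers verbatim. This is precisely the point where both halves of $\TransformIsCorrect{\TransformInstance}$ are jointly needed, and it explains why the meta transform is guaranteed to preserve correctness only when applied to an already-correct transform.
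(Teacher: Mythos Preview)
Your proposal is correct and follows essentially the same approach as the paper's own proof: a one-level case analysis on $\TypeOfCogInstance{C}$ for both the functor laws and the two halves of correctness preservation, with the symmetric $\RHSMetaTransform$ case deferred. In particular, you identify the same key point the paper highlights in the $\BTreeAtom$ case, namely that re-establishing the separator constraint $\forall \RecordInstance \in \ContentsOfCog{\TransformInstance(C_1)} : \RecordInstance \KeyOrderStrict \KeyInstance$ requires equivalence preservation of $\TransformInstance$ (not just structure preservation), since $\ContentsOfCog{\TransformInstance(C_1)} = \ContentsOfCog{C_1}$ is what lets the bound transfer.
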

The proof, given in \Cref{apx:lhsrhsAreMeta}, is a simple structural recursion over cases.

We refer to the closure of $\LHSMetaTransform$ and $\RHSMetaTransform$ over the atomic transforms as the set of \emph{hierarchical transforms}, denoted $\HierarchicalTransforms$.
$$\HierarchicalTransforms = \AtomicTransformSet 
\cup \Comprehension{\LHSMetaTransform[\TransformInstance]}{\TransformInstance \in \HierarchicalTransforms} 
\cup \Comprehension{\RHSMetaTransform[\TransformInstance]}{\TransformInstance \in \HierarchicalTransforms}$$

\begin{corrolary}
Any hierarchical transform is correct.
\end{corrolary}

\section{Policies for Transforms}
\label{sec:policies}

Transforms give us a means of manipulating instances, but to actually allow an index to transition from one form to another we need a set of rules, called a \emph{policy}, to dictate which transform to apply and when.
We begin by defining policies broadly, before refining them into an efficiently implementable family of \emph{enumerable score-based policies}.

\begin{definition}[Policy]
A policy $\PolicyInstance$ is defined by the 2-tuple $\PolicyInstance = \Tuple{\PolicyDomain, \PolicyHeuristic}$, where the policy's \emph{domain} $\PolicyDomain \subseteq \TransformType$ is a set of transforms and $\PolicyHeuristic : \ \CogType \rightarrow \PolicyDomain$ is a heuristic function that selects one of these transforms to apply to a given instance.
\end{definition}

A policy guides the transition of an index from an instance representing its initial state to a final state achieved by repeatedly applying transforms selected by the heuristic $\PolicyHeuristic$. 
We call the sequence of instances reached in this way a \emph{trace}.

\begin{definition}[Trace]
The trace of a policy $\PolicyInstance = \Tuple{\PolicyDomain, \PolicyHeuristic}$ on instance $C_0$, denoted $\TraceOf(\PolicyInstance, C_0)$, is defined as the infinite sequence of instances $\ArrayOf{C_0, C_1, \ldots}$ starting with $C_0$, and with subsequent instances $C_i$ obtained as:
$$C_i \IsDefinedAs \TransformInstance_i(C_{i-1}) \;\;\;\textbf{  where  }\;\;\; \TransformInstance_i = \PolicyHeuristic(C_{i-1})$$
\end{definition}

Although traces are infinite, we are specifically interested in policies with traces that reach a steady (fixed point) state.
We say that such a trace (resp., any policy guaranteed to produce such a trace) is \emph{terminating}.

\begin{definition}[Terminating Trace, Policy]
A trace $\ArrayOf{C_1, C_2, \ldots}$ \emph{terminates} after $N$ steps if and only if $\forall i,j > N : C_i = C_j,$.
A policy $\PolicyInstance$ is \emph{terminating} when
$$\forall C \exists N : \TraceOf(\PolicyInstance, C) \textbf{ terminates after $N$ steps}$$
\end{definition}

A policy's domain may be large, or even infinite as in the case of the hierarchical transforms.
However, only a much smaller fragment will typically be useful for any specific instance.
We call this fragment the \emph{active domain}.

\begin{definition}
The \emph{active domain} of a policy $\Tuple{\PolicyDomain, \PolicyHeuristic}$, relative to an instance $C$ (denoted $\PolicyDomain_C$) is the subset of the policy's domain that does not behave as the identity on $C$.
$$\PolicyDomain_C \IsDefinedAs \Comprehension{\TransformInstance}{\TransformInstance \in \PolicyDomain \wedge \TransformInstance(C) \neq C}$$
\end{definition}

\HiddenProof{
  \begin{definition}[Acyclic Trace]
  A trace is acyclic iff it terminates after N steps and $\forall i,j \leq N$ and $i \neq j$ 
  $$C_i \neq C_j $$
  \end{definition}
  \begin{definition}[Finite Trace]
  A $Trace((D,H),C)$ is finite if $\exists N$ for which the trace is acyclic-N
  \end{definition}
  \begin{definition}[Acyclic Policy]
  A policy $(D, H)$ is acyclic iff $\forall C,$ $Trace((D,H),C)$ is finite. 
  \end{definition}
}

\subsection{Bounding the Active Domain}

A policy's heuristic function will be called numerous times in the course of an index transition, making it a prime candidate for performance optimization.
We next explore one particular family of policies that admit a stateful, incremental implementation of their heuristic function.
This approach treats the heuristic function as a ranking query over the active domain, selecting the most appropriate (highest scoring) transform at any given time.  
However, rather than recomputing scores at every step, we incrementally maintain a priority queue over the active domain.
For this incremental approach to be feasible, we need to ensure that only a finite (and ideally small) number of scores change with each step.

\begin{definition}[Enumerable Policy] 
A policy $\Tuple{\PolicyDomain, \PolicyHeuristic}$ is enumerable if and only if its active domain is finite for every finite instance $C$, or equivalently when
$\forall C : |\PolicyDomain_C| \in \mathbb N$
\end{definition}

We are particularly interested in policies that use the hierarchical transforms as their domain.  We also refer to such policies as \emph{hierarchical}.  
In order to show that hierarchical policies are enumerable, we first define a utility \emph{target} function that ``unrolls'' an arbitrarily deep stack of $\LHSMetaTransform$ and $\RHSMetaTransform$ meta transforms.
The target function returns (1) The atomic transform at the base of the stack of meta transforms and (2) the descendant that this atomic transform would be applied to.

\begin{definition}
\label{def:targetFunction}
Given a hierarchical policy $\Tuple{\HierarchicalTransforms, \PolicyHeuristic}$ and an instance $C$, let the \emph{target function} $\DescendantsOfCog{f}_C : \PolicyDomain_C \rightarrow (\DescendantsOfCog{C} \times \AtomicTransformSet)$ of the policy on $C$ is defined as follows
{\small
$$ 
\DescendantsOfCog{f}_C(\TransformInstance) \IsDefinedAs 
\begin{cases}
\Tuple{C, \TransformInstance} & \textbf{if } \TypeOfCogInstance{C} \in \SetOf{\ArrayAtom, \SortedAtom} \\
\Tuple{C, \TransformInstance} & \textbf{else if } \TransformInstance \in \AtomicTransformSet \\
\DescendantsOfCog{f}_{C_1}(\TransformInstance') & \textbf{else if } \TransformInstance = \LHSMetaTransform[\TransformInstance'] \\
\DescendantsOfCog{f}_{C_2}(\TransformInstance') & \textbf{else if } \TransformInstance = \RHSMetaTransform[\TransformInstance'] \\
\end{cases}
$$
}
\end{definition}

\begin{lemma}[Injectivity of $\DescendantsOfCog{f}_C$]
\label{thm:targetFnIsInjective}
The target function $\DescendantsOfCog{f}_C$ of any hierarchical policy $\Tuple{\HierarchicalTransforms, \PolicyHeuristic}$ for any instance $C$ is injective.
\end{lemma}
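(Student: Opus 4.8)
The plan is to prove the claim by structural induction on the instance $C$, viewing each hierarchical transform syntactically as a (possibly empty) stack of $\LHSMetaTransform$/$\RHSMetaTransform$ constructors sitting on top of a single atomic transform at its base; the pattern-matching style of Definition~\ref{def:targetFunction} makes this the natural reading. Throughout, I treat the elements of $\DescendantsOfCog{C}$ as the positions (nodes) of the tree $C$ rather than as instances identified up to structural equality. This is what the disjoint union $\uplus$ in the definition of $\DescendantsOfCog{\cdot}$ encodes, and it is essential: under structural equality alone the statement is false (for $C = \ConcatAtom(A, A)$ the two distinct transforms $\LHSMetaTransform[\SortTransform]$ and $\RHSMetaTransform[\SortTransform]$ would both land on the single instance $A$). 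Under the positional reading, the recurrence $\DescendantsOfCog{C} = \DescendantsOfCog{C_1} \uplus \DescendantsOfCog{C_2} \uplus \BagOf{C}$ asserts that the node sets of the two children are disjoint and that neither contains the root $C$.

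For the base case, suppose $\TypeOfCogInstance{C} \in \SetOf{\ArrayAtom, \SortedAtom}$. Since $\LHSMetaTransform$ and $\RHSMetaTransform$ act as the identity on any non-$\ConcatAtom$/$\BTreeAtom$ instance, every meta-wrapped transform fixes $C$ and is therefore excluded from the active domain $\PolicyDomain_C$; hence $\PolicyDomain_C$ consists only of atomic transforms. The target function sends each such $\TransformInstance$ to $\Tuple{C, \TransformInstance}$, and reading off the second coordinate recovers $\TransformInstance$, so $\DescendantsOfCog{f}_C$ is injective.

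For the inductive step, let $C = \ConcatAtom(C_1, C_2)$ or $C = \BTreeAtom(\Wildcard, C_1, C_2)$ and assume, by induction, that $\DescendantsOfCog{f}_{C_1}$ and $\DescendantsOfCog{f}_{C_2}$ are injective. I would partition $\PolicyDomain_C$ by the top constructor of each transform into the atomic, $\LHSMetaTransform$-headed, and $\RHSMetaTransform$-headed classes. A short check shows $\LHSMetaTransform[\TransformInstance'] \in \PolicyDomain_C$ iff $\TransformInstance'(C_1) \neq C_1$, i.e.\ iff $\TransformInstance' \in \PolicyDomain_{C_1}$ (and symmetrically for $\RHSMetaTransform$ and $C_2$), so the recursive calls land exactly in the domains covered by the induction hypothesis. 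Injectivity then follows in two stages. First, the three classes map into disjoint parts of the codomain: atomic transforms yield pairs whose first coordinate is the root $C$, whereas the $\LHSMetaTransform$- (resp.\ $\RHSMetaTransform$-) headed transforms yield first coordinates lying in $\DescendantsOfCog{C_1}$ (resp.\ $\DescendantsOfCog{C_2}$), and these three node sets are pairwise disjoint by the positional reading above. Second, within each class injectivity is immediate: on atomic transforms the second coordinate is the transform itself, while on the meta-headed classes $\DescendantsOfCog{f}_C(\LHSMetaTransform[\TransformInstance']) = \DescendantsOfCog{f}_{C_1}(\TransformInstance')$ composes the injective constructor $\TransformInstance' \mapsto \LHSMetaTransform[\TransformInstance']$ with the (by hypothesis) injective map $\DescendantsOfCog{f}_{C_1}$.

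I expect the main obstacle to be making the disjointness argument fully rigorous, namely formalizing that the first coordinate returned by a meta-headed transform genuinely lies in the correct child's descendant set and that these sets are mutually disjoint and exclude $\BagOf{C}$. This is precisely the point where the positional (bag-with-$\uplus$) semantics of $\DescendantsOfCog{\cdot}$, as opposed to identification up to structural equality, must be invoked; once that is pinned down, the cross-class disjointness and the within-class injectivity both fall out of the induction hypothesis and the unique parsing of hierarchical transforms as meta-stacks over atomic bases.
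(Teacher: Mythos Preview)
Your proposal is correct and follows essentially the same approach as the paper's proof: structural induction on $C$, with the same base case (leaves force $\PolicyDomain_C \subseteq \AtomicTransformSet$, so the second coordinate recovers the input) and the same inductive step (partition $\PolicyDomain_C$ by top constructor into atomic, $\LHSMetaTransform$-headed, and $\RHSMetaTransform$-headed, then use that these map into the three disjoint pieces $\BagOf{C}$, $\DescendantsOfCog{C_1}$, $\DescendantsOfCog{C_2}$ of $\DescendantsOfCog{C}$). Your treatment is in fact more careful than the paper's on the one genuinely delicate point, namely that $\DescendantsOfCog{C}$ must be read positionally (as a bag of nodes, via $\uplus$) rather than up to structural equality; the paper relies on this implicitly without comment.
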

\begin{proof}
By recursion over $C$.
The base case occurs when $\TypeOfCogInstance{C} \in \SetOf{\ArrayAtom, \SortedAtom}$.  
In this case $\DescendantsOfCog{C} = \BagOf{C}$.
Furthermore, 
$\forall \TransformInstance : \LHSMetaTransform[T] = \RHSMetaTransform[T] = \IdentityTransform$
and so $\PolicyDomain_C \subseteq \AtomicTransformSet$.
The target function always follows its first case and is trivially injective.
The first recursive case occurs for $C = \ConcatAtom(C_1, C_2)$. 
By definition, a hierarchical transform can be (1) An atomic transform, (2) $\LHSMetaTransform[\TransformInstance]$, or (3) $\RHSMetaTransform[\TransformInstance]$.
Each of the latter three cases covers one of each of the three parts of the definition of a hierarchical transform.
Assuming that $\DescendantsOfCog{f}_{C_1}$ and $\DescendantsOfCog{f}_{C_2}$ are injective, $\DescendantsOfCog{f}_C$ will also be injective because each case maps to a disjoint partition of 
$\DescendantsOfCog{C} = \DescendantsOfCog{C_1} \uplus \DescendantsOfCog{C_2} \uplus \BagOf{C}$.
The proof for the second recursive case, where $\TypeOfCogInstance{C} = \BTreeAtom$ is identical.
Thus $\forall C : \DescendantsOfCog{f}_C$ is injective
\end{proof}

Using injectivity of the target function, we can show that any hierarchical policy is enumerable.

\begin{theorem}[Hierarchical policies are enumerable]
\label{thm:hierarchicalAreEnumerable}
Any hierarchical policy $\Tuple{\HierarchicalTransforms, \PolicyHeuristic}$ is enumerable.
\end{theorem}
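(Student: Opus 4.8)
The plan is to finish the argument with a short cardinality bound, now that the injectivity Lemma~\ref{thm:targetFnIsInjective} has done the structural heavy lifting. Enumerability asks only that the active domain $\PolicyDomain_C$ be finite for every finite instance $C$, so it suffices to exhibit an injection from $\PolicyDomain_C$ into a set already known to be finite and then count.

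First I would invoke Lemma~\ref{thm:targetFnIsInjective}, which gives, for the hierarchical policy $\Tuple{\HierarchicalTransforms, \PolicyHeuristic}$ and any instance $C$, an injective target function $\DescendantsOfCog{f}_C : \PolicyDomain_C \rightarrow (\DescendantsOfCog{C} \times \AtomicTransformSet)$. Next I would argue that the codomain is finite: the factor $\DescendantsOfCog{C}$ is finite by Proposition~\ref{thm:descendantsOfCAreFinite}, and the factor $\AtomicTransformSet$ is finite by construction, consisting of the eight named atomic transforms (the six of Figure~\ref{fig:transformExamples} together with $\PivotRightTransform$ and $\IdentityTransform$). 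Hence the product $\DescendantsOfCog{C} \times \AtomicTransformSet$ is finite, of cardinality $|\DescendantsOfCog{C}| \cdot |\AtomicTransformSet|$.

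Finally, since an injection cannot embed a set into one of strictly smaller cardinality, the existence of $\DescendantsOfCog{f}_C$ forces $|\PolicyDomain_C| \leq |\DescendantsOfCog{C}| \cdot |\AtomicTransformSet| < \infty$. As $C$ was arbitrary, every active domain is finite, which is exactly the condition $\forall C : |\PolicyDomain_C| \in \mathbb N$ defining an enumerable policy.

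There is no genuine obstacle left at this stage: all of the difficulty was absorbed into establishing that $\DescendantsOfCog{f}_C$ is well-defined and injective, so that each active hierarchical transform can be identified with a distinct descendant/atomic-transform pair. The only points worth stating explicitly are that $\AtomicTransformSet$ is a fixed finite set and that an injection into a finite set bounds the size of its domain — both routine. If I wanted to be maximally careful, I would also confirm that $\DescendantsOfCog{f}_C$ is total on $\PolicyDomain_C$, i.e.\ that the unrolling of the meta transforms terminates for every active transform; this is guaranteed because, by the closure definition of $\HierarchicalTransforms$, each hierarchical transform is a finite stack of $\LHSMetaTransform$ and $\RHSMetaTransform$ wrappers over a single atomic transform.
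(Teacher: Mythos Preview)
Your proposal is correct and follows essentially the same route as the paper: invoke the injectivity of $\DescendantsOfCog{f}_C$ from Lemma~\ref{thm:targetFnIsInjective}, then bound $|\PolicyDomain_C|$ by $|\DescendantsOfCog{C}|\cdot|\AtomicTransformSet|$ using Proposition~\ref{thm:descendantsOfCAreFinite} and the finiteness of $\AtomicTransformSet$. Your added remarks on totality of $\DescendantsOfCog{f}_C$ and the explicit cardinality-of-injection step are sound but more detailed than the paper's own version, which simply chains the two finiteness facts.
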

\begin{proof}
Recall the definition of hierarchical transforms
$$\HierarchicalTransforms = \AtomicTransformSet 
\cup \Comprehension{\LHSMetaTransform[\TransformInstance]}{\TransformInstance \in \HierarchicalTransforms} 
\cup \Comprehension{\RHSMetaTransform[\TransformInstance]}{\TransformInstance \in \HierarchicalTransforms}$$
\noindent
By \Cref{thm:targetFnIsInjective}, $|\PolicyDomain_C| \leq |\DescendantsOfCog{C} \times \AtomicTransformSet| \leq |\DescendantsOfCog{C}|\times|\AtomicTransformSet|$.  
By \Cref{thm:descendantsOfCAreFinite}, $\DescendantsOfCog{C}$ is finite and the set of atomic transforms $\AtomicTransformSet$ is finite by definition
Thus, $\PolicyDomain_C$ must also be finite.
\end{proof}

Intuitively, there is a finite number of atomic transforms ($|\AtomicTransformSet$), that can be applied at a finite set of positions within $C$ ($|\DescendantsOfCog{C}$). 
Any other hierarchical transform must be idempotent, so we can (very loosely) bound the active domain of a hierarchical policy on instance $C$ by $|\DescendantsOfCog{C}|\times|\AtomicTransformSet|$

\subsection{Scoring Heuristics}

As previously noted, we are particularly interested in policies that work by scoring the set of available transforms with respect to their utility.  

\begin{definition}[Scoring Policy]
Let $\ScoreFunction : (\PolicyDomain \times \CogType) \rightarrow \mathbb{N}_0$ be a \emph{scoring function} for every transform, instance pair ($\TransformInstance, C$) that satisfies the constraint:
$(\TransformInstance(C) = C) \Rightarrow (\ScoreFunction(\TransformInstance, C) = 0)$
A \emph{scoring policy} $\Tuple{\PolicyDomain, \PolicyHeuristic_{\ScoreFunction}}$ is a policy with a heuristic function defined as
$\PolicyHeuristic_{\ScoreFunction}(C) \IsDefinedAs \textbf{argmax}_{\TransformInstance \in \PolicyDomain}(\ScoreFunction(\TransformInstance,C))$
\end{definition}

In short, a scoring heuristic policy one that selects the next transform to apply based on a scoring function $\ScoreFunction$, breaking ties arbitrarily.
Additionally, we require that transforms not in the active domain (i.e., that leave their inputs unchanged) must be assigned the lowest score (0).

As we have already established, the number of scores that need to be computed is finite and enumerable.  
However, it is also linear in the number of atoms in the instance.
Ideally, we would like to avoid recomputing all of the scores at each iteration by precomputing the scores once and then incrementally maintaining them as the instance is updated.
For this to be feasible, we also need to bound the number of scores that change with each step of the policy.
We do this by first defining two properties of policies: independence, which requires that the score of a (hierarchical) transform be exclusively dependent on its target atom (\Cref{def:targetFunction}); and locality, which further requires that the score of a transform be independent of the node's descendants past a bounded depth.  
We then show that with any scoring function that satisfies these properties, only a finite number of scores change with any transform, and consequently that the output of the scoring function on every element of the active domain can be efficiently incrementally maintained.

\begin{definition}[Independent Policy]
\label{def:independentPolicy}
Let $C^{<}$ be the set of instances with $C$ as a left child.
\begin{multline*}
C^{<} = \Comprehension{\ConcatAtom(C, C')}{C' \in \CogType} \\
\cup \Comprehension{\BTreeAtom(\KeyInstance, C, C')}{\KeyInstance \in \KeyType \wedge C' \in \CogType}
\end{multline*}
and define $C^{>}$ symmetrically as the set of instances with $C$ as a right child.
We say that a hierarchical scoring policy $\Tuple{\HierarchicalTransforms, \PolicyHeuristic_{\ScoreFunction}}$ is \emph{independent} if and only if for any $\TransformInstance$, $C$
\begin{eqnarray*}
\forall C' \in C^{<} : \ScoreFunction(\TransformInstance, C) 
&=& \ScoreFunction(\LHSMetaTransform[\TransformInstance], C')\\
\forall C' \in C^{>} : \ScoreFunction(\TransformInstance, C) 
&=& \ScoreFunction(\RHSMetaTransform[\TransformInstance], C')
\end{eqnarray*}
\end{definition}


\begin{definition}[Local Policy]
An independent hierarchical scoring policy $\Tuple{\HierarchicalTransforms, \PolicyHeuristic_{\ScoreFunction}}$ is local if and only if:
\begin{multline*}
\forall \TransformInstance\forall C_1 \forall C_2 \textbf{ s.t. } (\DescendantsOfCog{C_1} - \BagOf{C_1}) = (\DescendantsOfCog{C_2} - \BagOf{C_2}): \\
\ScoreFunction(\TransformInstance, C_1) = \ScoreFunction(\TransformInstance, C_2)
\end{multline*}
\end{definition}

The following definition uses the policy's target function (\Cref{def:targetFunction}) to define a weighted list of all of the policy's targets.

\begin{definition}[Weighted Targets]
Let $\Tuple{\HierarchicalTransforms, \PolicyHeuristic_{\ScoreFunction}}$ be a hierarchical scoring policy.
The weighted targets of instance $C$, denoted $\WeightedTargets_C : \BagOfType{\AtomicTransformSet \times \mathbb N_0}$ is bag of 2-tuples defined as
{\footnotesize
$$\WeightedTargets_C = 
\BagOf{\Tuple{\TransformInstance', \ScoreFunction(\TransformInstance, C')}\;\big |\;\TransformInstance \in \PolicyDomain_C \wedge (C', \TransformInstance') = \DescendantsOfCog{f}_C(\TransformInstance)}
$$
}
\end{definition}

\begin{theorem}[Bounded Target Updates] 
\label{thm:boundedTargetUpdates}
Let $\Tuple{\HierarchicalTransforms, \PolicyHeuristic_{\ScoreFunction}}$ be a local hierarchical scoring policy, $C$ be an instance, $\TransformInstance \in \PolicyDomain_C$ be a transform, and $C' = \TransformInstance(C)$.
The weighted targets of $C$ and $C'$ differ by at most $4 \times |\AtomicTransformSet|$ elements.$$\big|(\WeightedTargets_{C} \uplus \WeightedTargets_{C'}) - (\WeightedTargets_{C} \cap \WeightedTargets_{C'})\big| \leq 4 \times |\AtomicTransformSet|$$
\end{theorem}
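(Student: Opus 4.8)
The plan is to turn the global statement about the two weighted-target bags into a purely local structural comparison: first rewrite each of $\WeightedTargets_C$ and $\WeightedTargets_{C'}$ as a disjoint union of per-node contributions, and then show that $\TransformInstance$ disturbs only a constant number of those contributions. To begin, I would use the target function and its injectivity (\Cref{thm:targetFnIsInjective}) to pin down exactly where $\TransformInstance$ acts. Since $\TransformInstance \in \PolicyDomain_C$ is hierarchical, $\DescendantsOfCog{f}_C(\TransformInstance) = \Tuple{C_d, \TransformInstance'}$ singles out a unique descendant $C_d \in \DescendantsOfCog{C}$ and a unique atomic transform $\TransformInstance' \in \AtomicTransformSet$; unrolling the stack of $\LHSMetaTransform$/$\RHSMetaTransform$ wrappers shows that $C' = \TransformInstance(C)$ is precisely $C$ with the subtree rooted at $C_d$ replaced by $\TransformInstance'(C_d)$, while every sibling subtree hanging off the root-to-$C_d$ path is shared verbatim between $C$ and $C'$.

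Next I would decompose the bags node by node. By injectivity of $\DescendantsOfCog{f}_C$, the elements of $\WeightedTargets_C$ are in bag-bijection with the active (descendant, atomic transform) pairs, and by independence (\Cref{def:independentPolicy}) the score recorded for each such pair is simply that atomic transform's score evaluated at that descendant. Hence $\WeightedTargets_C = \biguplus_{d \in \DescendantsOfCog{C}} w(d)$, where the contribution $w(d) = \BagOf{\Tuple{\TransformInstance', \ScoreFunction(\TransformInstance', d)} \mid \TransformInstance' \in \AtomicTransformSet \wedge \TransformInstance'(d) \neq d}$ carries at most $|\AtomicTransformSet|$ tuples. The point of this rewriting is that $w(d)$ is a function of the subtree at $d$ alone: independence has already stripped away any dependence on $d$'s ancestors, and locality caps how deeply into $d$'s own subtree the scores and the activeness tests can see.

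The heart of the argument is then to bound how many contributions fail to cancel between $\biguplus_{d \in \DescendantsOfCog{C}} w(d)$ and $\biguplus_{d \in \DescendantsOfCog{C'}} w(d)$. Two mechanisms drive cancellation. First, because $\WeightedTargets$ records (atomic transform, score) tuples rather than node identities, any node of $C$ and any node of $C'$ with matching local neighborhoods contribute identical tuple-bags that annihilate in the difference; in particular the shared sibling subtrees cancel wholesale, and locality forces every deep, untouched ancestor of $C_d$ to cancel as well, since its subtree differs only far below the locality horizon. Second, neighbors that the edit leaves structurally adjacent but whose type and bounded neighborhood are unchanged also cancel. What survives is a constant-size window around $C_d$ — the nodes created or destroyed by $\TransformInstance'$ together with the bounded band of neighbors whose activeness or scores the edit can actually flip. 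A case analysis over the eight atomic transforms, with the pivots (two internal nodes swapped for two others), $\DivideTransform$/$\CrackTransform$, and $\MergeTransform$ as the extreme cases, confirms that this window never leaves more than four contributions uncancelled; as each is a bag of at most $|\AtomicTransformSet|$ tuples, the symmetric difference is bounded by $4 \times |\AtomicTransformSet|$.

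I expect the main obstacle to be the cancellation of the arbitrarily many and arbitrarily deep ancestors of $C_d$, rather than the per-transform bookkeeping. This is exactly where independence and locality must be combined: independence so that an ancestor's contribution is measured against the subtree it points to instead of against its position, and locality so that this measurement is blind to the logically equivalent but structurally altered deep rewrite. Establishing that cancellation is what keeps the bound a fixed constant instead of growing with the depth of $C_d$; once it is in hand, the case analysis pinning the surviving window at four is routine.
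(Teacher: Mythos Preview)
Your proposal is correct and takes essentially the same approach as the paper: both arguments use independence and locality to show that only the nodes directly touched by the underlying atomic transform contribute to the symmetric difference, then finish with a case analysis over $\AtomicTransformSet$ to get the factor of four. The only difference is organizational---the paper packages the ancestor-cancellation step as a recursion over the $\LHSMetaTransform/\RHSMetaTransform$ structure of $\TransformInstance$ (declaring the recursive case ``trivial, since the weighted targets are independent of prefixes in a local policy''), whereas you unroll that recursion in one shot via the target function and then argue cancellation in your per-node decomposition; the content is the same.
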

The proof, given in \Cref{apx:boundedTargetUpdates}, is based on the observation that the independence and locality properties restrict changes to the target function's outputs to exactly the set of nodes added, removed, or modified by the applied transform, excluding ancestors or descendants.  In the worst cases (\DivideTransform, \CrackTransform, or \MergeTransform) this is 4 nodes.

\Cref{thm:boundedTargetUpdates} shows that we can incrementally maintain the weighted target list incrementally, as only a finite number of its elements change at any policy step.  
This allows us to materialize the weighted target list as a priority queue, who's first element is always the policy's next transform.



\section{Implementing The \IndexName Runtime}
\label{sec:implementation}

So far, we have introduced \CogName and shown how policies can be used to gradually reorganize a \CogName instance by repeatedly applying incremental transforms to the structure.  

In this section, we discuss the challenges in translating \IndexNames from the theory we have defined so far into practice.
As already noted, \CogName instances describe the physical layout of a \IndexName.  
We implemented each atom as a C++ class using the reference-counted \texttt{shared\_ptr}s for garbage collection.
To implement the \ArrayAtom{} and \SortedAtom{} atoms, we used the  C++ Standard Template Library \texttt{vector} class.

\subsection{Concurrency and Handles}
Because \IndexNames rely on background optimization, efficient concurrency is critical.
This motivated our choice to base the \IndexName index on functional data structures.
In a functional data structure, objects are immutable once instantiated.
Only the root may be updated to a new version, typically through an atomic pointer swap.  
Explicit versioning makes it possible for the background worker thread to construct a new version of the structure without taking out any locks in the process.
Only a short lock is required to swap in the new version.

Immutability does come with a cost: any mutations must also copy un-modified data into a new object.  
However, careful use of pointers can minimize the impact of such copies.

\begin{figure}
\centering
\includegraphics[width=0.7\columnwidth]{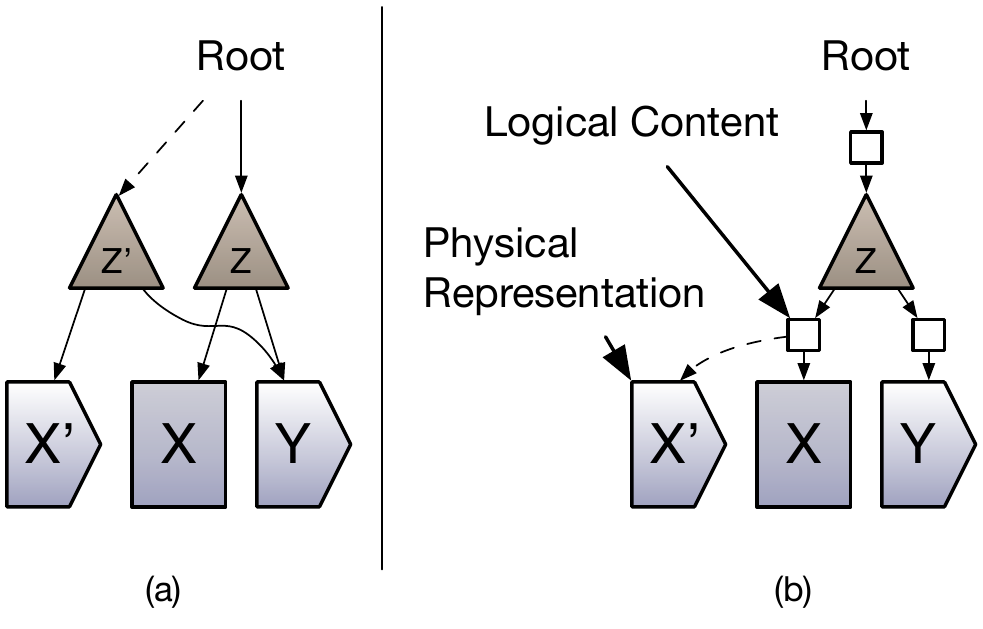}
\caption{Classical immutable data structures (a) vs with handles (b).}
\label{fig:handles}
\trimfigurespacing
\end{figure}

\begin{example}
\Cref{fig:handles}.a shows the effects of applying $\LHSMetaTransform[\SortTransform]$ to an immutable \CogName instance, replacing an unsorted $\ArrayAtom$ \texttt{X} with a $\SortedAtom$ equivalent \texttt{X'}.  
Note that in addition to replacing \texttt{X}, each of its ancestors must also be replaced.
\end{example}

Replacing a logarithmic number of ancestors is better than replacing the entire structure.
However, even a logarithmic number of new objects for every update can be a substantial expense when individual transforms can take on the order of micro-seconds. 
We avoid this overhead by using indirection to allow limited mutability under controlled circumstances.
Inspired by early forms of memory management~\cite{handles}, we define a new object called a \emph{handle}.

Handles store a pointer to a \CogName atom, and all \CogName atoms (i.e., \BTreeAtom{} and \ConcatAtom{}), as well as the root, use handles as indirect references to their children.
Handles provide clear semantics for a programmer expectations: 
A pointer to an atom guarantees \emph{physical} immutability, while a pointer to a handle guarantees only \emph{logical} immutability.  
Thus, any thread can safely replace the pointer stored in a handle with a pointer to any other logically equivalent atom.
Accordingly, we refer to such structures as \emph{semi-functional} data structures.

\begin{example}
Continuing the example, \Cref{fig:handles}.b shows the same operation performed on a structure that uses handles.  
Ancestors of the modified node are unchanged: Only the handle pointer is modified.
\end{example}

We observe that \CogName atoms can safely be implemented using handles.
The only correctness property we need to enforce is structural correctness, which depends only on the node itself and the \emph{logical} contents ($\ContentsOfCog{\cdot}$) of its descendants.
Thus only logical consistency is needed and handles suffice.

Similarly, the \LHSMetaTransform{} and \RHSMetaTransform{} and meta transform creates an exact copy of the root, modulo the affected pointer.
Furthermore the only node modified is the one reached by unrolling the stack of meta transforms, and by definition correct transforms must produce a new structure that is logically equivalent.
Thus, any hierarchical transform can be safely, efficiently applied to a \IndexName by a single modification to the handle of the target atom (\Cref{def:targetFunction}).

\newcommand{\IteratorOpPrefix}{\hspace*{-1.5mm}\textcolor{gray}{\ding{228}}\hspace{1mm}}

\subsection{Concurrent Access Paths}
We have already described the get method in \Cref{sec:overview}.
The remaining access paths instantiate iterators that traverse the tree, lazily dereferencing handles as necessary.
Un-ordered iterators provide two methods:\\

\noindent {\footnotesize
\begin{tabular}{ll}
\IteratorOpPrefix \texttt{$\RecordInstance \leftarrow$ Get()} & returns the iterator's current record\\
\IteratorOpPrefix \texttt{Step()} & advances the iterator to the next record\\
\multicolumn{2}{l}{
\hspace*{-2.3mm}
Additionally, ordered iterators provide the method:
}\\
\IteratorOpPrefix \texttt{Seek($\KeyInstance$)}& advances to the first record $\RecordInstance$ where $\RecordInstance \KeyOrderRev \KeyInstance$
\end{tabular}
}\\

\noindent For iterators over \SortedAtom{} and \ArrayAtom{} atoms, we directly use the C++ vector class's iterator.
Generating an ordered iterator over an \ArrayAtom{} atom forces a \SortTransform{} first.
Iterators for the remaining atom types lazily create a replica of the root instance using only physical references to ensure consistency.
Unordered iterators traverse trees left to right.
Ordered iterators over \ConcatAtom{} atoms are implemented using merge-sort.
We implement a special-case iterator for \BTreeAtom{}s that iterates over contiguous \BTreeAtom{}s, lazily loading nodes from their handles as needed.

\subsection{Handles and Updates}
Handles also make possible concurrency between a \IndexName's worker thread and threads updating the \IndexName.  
In keeping with the convention that structures referenced by a handle pointers can only be swapped with logically equivalent structures, a thread updating a \IndexName must replace the root handle with an entirely new handle. 
Because the worker thread will only ever swap pointers referenced by a handle, it will never undo the effects of an update.
Better still, if the old root handle is re-used as part of the new structure (as discussed in \Cref{sec:overview}), optimizations applied to the old root or any of its descendants will seamlessly be applied to the new version of the index as well.

\subsection{Transforms and the Policy Scheduler}

Our policy scheduler is optimized for local hierarchical policies.  
Policies are implemented by defining a scoring function
\vspace*{-3mm}
$$\mathbb N_0 \leftarrow \texttt{score($\TransformInstance$, $C$)} \textbf{ where } \TransformInstance \in \AtomicTransformSet\vspace*{-1mm}$$
Based on this function, the policy scheduler builds a priority queue of 3-tuples $\Tuple{\texttt{handle}, \AtomicTransformSet, \mathbb N_0}$, including a \texttt{handle} to a descendant of the root, an atomic transform to apply to the descendant instance, and the policy's score for the transform applied to the instance referenced by the handle.  
As an optimization, only the highest-scoring transform for each handle is maintained in the queue.
The scheduler iteratively selects the highest scoring transform and applies it to the structure.  Handles destroyed (resp., created) by applied transforms are removed from (resp., added to) the priority queue.
The iterator continues until no transforms remain in the queue or all remaining transforms have a score of zero, at which point we say the policy has \emph{converged}.

\subsection{Example Policy: Crack-Sort-Merge}
\label{sec:crack-or-sort}

As an example of policies being used to manage cost/benefit tradeoffs in index structures, we compare two approaches to data loading: database cracking~\cite{DBLP:conf/cidr/IdreosKM07} and upfront organization.
In a study comparing cracking to upfront indexing, Schuhknecht et. al.~\cite{DBLP:journals/pvldb/SchuhknechtJD13} observe that for workloads consisting of more than a few scans, it is faster to build an index upfront.
Here, we take a more subtle approach to the same problem.
The $\CrackTransform$ transform has lower upfront cost than the $\SortTransform$ transform (scaling as $O(N)$ vs $O(N\log N)$), but provides a smaller benefit.
Given a fixed time budget or fixed latency goal, is it better to repeatedly crack, sort, or mix the two approaches together.
We address this question with a family of scoring functions $\texttt{score}_\theta$, parameterized by a threshold value $\theta$ as follows:
\resizebox{\columnwidth}{!}{
$\texttt{score$_\theta$($\TransformInstance$,$\ArrayAtom(\ArrayOf{r_1\ldots r_N})$)} = \begin{cases}
N & \textbf{if } \TransformInstance = \SortTransform \textbf{ and } N < \theta\\
N & \textbf{if } \TransformInstance = \CrackTransform \textbf{ and } N < \theta\\
0 & \textbf{otherwise}
\end{cases}$
}
$\ArrayAtom$s smaller than the threshold are sorted, while those larger are cracked.
Larger instances are preferred over smaller.
All other instances are ignored.
Once all \ArrayAtom{}s are sorted, the resulting \SortedAtom{}s are iteratively \MergeTransform{}ed, ultimately leaving behind a single \SortedAtom{}.

This is one example of a parameterized policy, a reorganizational strategy that uses thresholds to guide its behavior.  
Once such a policy is defined, the next challenge is to select appropriate values for its parameters.

\section{Policy Optimization}
\label{sec:model}

A \IndexName's performance curve depends on its policy.
As we may have a range of policies to choose from --- for example by varying policy parameters as mentioned above --- we want a way to evaluate the utility of a policy for a given workload.
Naively, we might do this by repeatedly evaluating the structure under each policy, but doing so can be expensive.
Instead, we next propose a performance model for \IndexNames, policies, and a lightweight simulator that approximates the performance of a policy over time.  
Our approach is to see each transformation as an overhead performed in exchange for improved query performance. 
Hence, our model is based on two measured characteristics of the \IndexName: The costs of accessing an instance, and the cost of applying a transform.
A separate driver program measures (1) the cost of each access path on each instance atom type, varying every parameter available, and (2) the cost of each case of every transform.

\begin{example}
As an illustrative example, we will use the Crack-or-Sort policy described above.
This policy makes use of the $\ArrayAtom$, $\SortedAtom$, and $\BTreeAtom$ atoms, as well as the $\CrackTransform$ and $\SortTransform$ transforms.  
For this policy we need to measure 5 factors.
{\small
\begin{center}
\begin{tabular}{lcc}
\textbf{Operation} & \textbf{Symbol} & \textbf{Scaling}\\ \hline
\texttt{Get($\ArrayAtom(\ArrayOf{r_1\ldots r_N})$)} & $\alpha(N)$ & $O(N)$ \\
\texttt{Get($\SortedAtom(\ArrayOf{r_1\ldots r_N}))$)} & $\beta(N)$ & $O(\log_2(N))$ \\
\texttt{Get($\BTreeAtom(\KeyInstance, C_1, C_2))$)} & $\gamma$ & $O(1)$ \\
$\CrackTransform(\ArrayAtom(\ArrayOf{r_1\ldots r_N}))$ & $\delta(N)$ & $O(N)$ \\
$\CrackTransform(\ArrayAtom(\ArrayOf{r_1\ldots r_N}))$ & $\nu(N)$ & $O(n\log_2(n))$
\end{tabular}
\end{center}
}
The driver program fits each of the five functions by conducting multiple timing experiments, varying the size of $N$ where applicable.
\end{example}

The simulator mirrors the behavior of the full \IndexName, but uses a lighter-weight version of the \CogName grammar that does not store actual data:
\begin{eqnarray*}
\CogType^{\ell} &=&       \ArrayAtom(\mathbb N_0)
         \;\;|\;\; \SortedAtom(\mathbb N_0)\\
         &|&       \ConcatAtom(\CogType^{\ell}, \CogType^{\ell})
         \;\;|\;\; \BTreeAtom(\CogType^{\ell}, \CogType^{\ell})
\end{eqnarray*}
The simulator iteratively simulates applying transforms to instances expressed in $\CogType^{\ell}$ according to the policy being simulated.  
After each transform, the simulator uses the measured cost of the transform to estimate the cumulative time spent reorganizing the index.
The simulator captures multiple performance metrics $\texttt{metric} : \CogType \rightarrow \mathbb R$.

\begin{example}
Continuing the example, one useful metric is the read latency for a uniformly distributed read workload on a Crack-or-Sort index.
{\footnotesize
$$
\texttt{latency}(C) = \begin{cases}
\alpha(N) 
  & \textbf{if } C = \ArrayAtom(N)\\
\beta(N) 
  & \textbf{if } C = \ArrayAtom(N)\\
\gamma + \frac{|C_1|}{|C|}\texttt{latency}(C_1) \\
\phantom{\gamma} + \frac{|C_2|}{|C|}\texttt{latency}(C_2) 
  & \textbf{if } C = \BTreeAtom(C_1, C_2)\\
\end{cases}$$
}
where $|C|$ is the sum of sizes of $\ArrayAtom$s and $\SortedAtom$s in $\DescendantsOfCog{C}$.  
\end{example}

The simulator produces a sequence of status intervals: periods during which index performance is fixed, prior to the pointer swap after the next transform is computed.
A user-provided utility function aggregates these intervals to provide a final utility score for the entire policy.
Given a finite set of policies, the optimizer tries each in turn and selects the one that best optimizes the utility function.
Given a parameterized policy, the optimizer instead uses gradient descent.

\begin{example}
Examples of utility functions for Crack-or-Sort include: (1) Minimize time spent with more than $\theta$ \texttt{Get()} latency, (2) Maximize throughput for $N$ seconds, (3) Minimize runtime of $N$ queries.
\end{example}

\section{On the Generality of \IndexNames}
\label{sec:discussion}

Ideally, we would like \CogName to be expressive enough to encode the instantaneous state of any data structure.
Infinite generality is obviously out of scope for this paper.
However we now take a moment to assess exactly what index data structure design patterns are supported in a \IndexName.

As a point of reference we use a taxonomy of data structures proposed as part of the Data Calculator~\cite{DBLP:conf/sigmod/IdreosZHKG18}.  
The data calculator taxonomy identifies 22 design primitives, each with a domain of between 2 and 7 possible values.
Each of the roughly $10^{18}$ valid points in this 22-dimensional space describes one possible index structure.
To the best of our knowledge, this represents the most comprehensive a survey of the space of possible index structures developed to date.

The data calculator taxonomy views index structures through the general abstraction of a tree with inner nodes and leaf nodes.
This abstraction is sometimes used loosely: A hash table of size N, for example, is realized as as a tree with precisely one inner-node and N leaf nodes.
Each of the taxonomy's design primitives captures one set of mutually exclusive characteristics of the nodes of this tree and how they are translated to a physical layout.

\newcommand{\Yes}{\textcolor{green!60!black!60}{\ding{108}}}
\newcommand{\Part}{\textcolor{blue!50!black!50}{\ding{119}}}
\newcommand{\No}{\textcolor{red!30!blue!30!black!40}{\ding{109}}}
\newcommand{\Imm}{\textcolor{red}{\ding{55}}}

\begin{figure*}
\centering
  \begin{tabular}{rl|c|l}
    \cline{1-4}\\[-3mm]
    \#&\textbf{Data Calculator Primitive} & \IndexName & \textbf{Note}\\
    \cline{2-4}

    {\tiny 1}  & Key retention                & \Part & No partial keys\\
    {\tiny 2}  & Value retention              & \No   & \\
    {\tiny 3}  & Key order                    & \Part & K-ary orders unsupported\\  
    {\tiny 4}  & Key-Value layout             & \No   & No columnar layouts\\
    {\tiny 5}  & Intra-node access            & \Yes  & \\ 
    {\tiny 6}  & Utilization                  & \Imm  & \\
    {\tiny 7}  & Bloom filters                & \No   & \\
    {\tiny 8}  & Zone map filters             & \Part & Implicit via \BTreeAtom\\
    {\tiny 9}  & Filter memory layout         & \No   & Requires filters (7,8)\\
    {\tiny 10}  & Fanout/Radix                & \No   & Limited to 2-way fanout\\
    {\tiny 11}  & Key Partitioning            & \Yes  & \\ 
    {\tiny 12} & Sub-block capacity           & \Imm  & \\
    {\tiny 13} & Immediate node links         & \No   & Simulated by iterator impl.\\
    {\tiny 14} & Skip-node links              & \No   & \\
    {\tiny 15} & Area links                   & \No   & Simulated by iterator impl.\\
    {\tiny 16} & Sub-block physical location. & \No   & Only pointed supported\\
    {\tiny 17} & Sub-block physical layout.   & \Part / \Imm & Realized by merge rewrite\\ 
    {\tiny 18} & Sub-block homogeneous        & \Yes  \\ 
    {\tiny 19} & Sub-block consolidation      & \Yes  & Depends on policy\\ 
    {\tiny 20} & Sub-block instantiation      & \Yes  & Depends on policy\\ 
    {\tiny 21} & Sub-block link layout        & \No   & Requires links (13,14,15)\\
    {\tiny 22} & Recursion allowed            & \Yes  \\ 
    \cline{1-4}
  \end{tabular}\\[1mm]
  \Yes: Full Support \hspace{5mm} \Part: Partial Support \hspace{5mm} \No: Support Possible\\
  \Imm: Not applicable to immutable data structures
  \caption{\IndexName support for the DC Taxonomy~\cite{DBLP:conf/sigmod/IdreosZHKG18}}
  \label{tab:dataprim}
  \trimfigurespacing
\end{figure*}

\Cref{tab:dataprim} classifies each of the design primitives as (1) Fully supported by \IndexName if it generalizes the entire domain, (2) Partially supported by \IndexName if it supports more than one element of the domain, or (3) Not supported otherwise.
We further subdivide this latter category in terms of whether support is feasible or not.
In general, the only design primitives that \IndexName can not generalize are related to mutability, since \IndexName's (semi-)immutability is crucial for concurrency, which is in turn required for optimization in the background.

\IndexName completely generalizes 7 of the remaining 22 primitives.
We first explain these primitives and how \IndexNames generalize them.
Then, we propose three extensions that, although beyond the scope of this paper, would fully generalize the final 14 primitives.  
For each, we briefly discuss the extension and summarize the challenges of realizing it.

\tinysection{Key retention (1)} This primitive expresses whether inner nodes store keys (in whole or in part), mirroring the choice between \ConcatAtom{} and \BTreeAtom.

\tinysection{Intra-node access (5)} This primitive expresses whether nodes (inner or child) allow direct access to specific children or whether they require a full scan, mirroring the distinction between \CogName nodes with and without semantic constraints.

\tinysection{Key partitioning (9)} This primitive expresses how newly added values are partitioned.  Examples include by key range (as in a B+Tree) or temporally (as in a log structured merge tree~\cite{DBLP:journals/acta/ONeilCGO96}).  Although a \IndexName only allows one form of insertion, policies can converge to the full range of states permitted for this primitive.

\tinysection{Sub-block homogeneous (18)} This primitive expresses whether all inner nodes are homogeneous or not.  

\tinysection{Sub-block consolidation/instantiation (19/20)} These primitives express how and when organization happens, as would be determined by a \IndexName's policy.

\tinysection{Recursion allowed (22)} This primitive expresses whether inner nodes form a bounded depth tree, a general tree, or a ``tree'' with a single node at the root.  \IndexNames support all three.

\subsection{Supporting New \CogName Atoms}

Five of the remaining primitives can be generalized by the addition of three new atoms to \CogName.
First, we would need a generalization of \BTreeAtom{} atoms capable of using partial keys as in a Trie (primitive 1), or hash values (primitive 3)
Second, a unary \textbf{Filter} atom that imposes a constraint on the records below it could implement both boom filters (primitives 7,9) and zone maps (primitives 8,9).  
These two atoms are conceptually straightforward, but introduce new transforms and increase the complexity of the search for effective policies.

The remaining challenge is support for columnar/hybrid layouts (primitive 4).
Columnar layouts increase the complexity of the formalism by requiring multiple record types and support for joining records.
Accordingly, we posit that a binary \textbf{Join} atom, representing the collection of records obtained by joining its two children could efficiently capture the semantics of columnar (and hybrid) layouts. 

\subsection{Atom Synthesis}

Five of the remaining primitives express various tactics for removing pointers by inlining groups of nodes into contiguous regions of memory.  
These primitives can be generalized by the addition of a form of atom synthesis, where new atoms are formed by merging existing atoms.  
Consider the Linked List of \Cref{ex:linkedlist}.
Despite the syntactic restriction over \CogName, a single linked list element must consist of two nodes (a \ConcatAtom{} and a (single-record) \ArrayAtom{}), and an unnecessary pointer de-reference is incurred on every lookup.
Assume that we could define a new node type: A linked list element 
($\textbf{Link}(\mathcal R, C)$) consisting of a record and a forward pointer.
Because this node type is defined in terms of existing node types, it would be possible to automatically synthesize new transformations for it from existing transformations, and existing performance models could likewise be adapted.

Atom synthesis could be used to create inner nodes that store values (primitive 2), increase the fanout of \ConcatAtom{} and \BTreeAtom{} nodes (primitive 10), inline nodes (primitive 16), and provide finer-grained control over physical layout of data (primitive 17).

\subsection{Links / DAG support}

The final four remaining properties (13, 14, 15, and 20) express a variety of forms of link between inner and leaf nodes. 
Including such links turns the resulting structure into a directed acyclic graph (DAG).
In principle, it should be possible to generalize transforms for arbitrary DAGs rather than just trees as we discuss in this paper.  
Such a generalization would require additional transforms that create/maintain the non-local links and more robust garbage collection.

\section{Evaluation}
\label{sec:evaluation}

We next evaluate the performance of \IndexNames in comparison to other commonly used data structures.
Our results show that: 
(1) In the longer term, \IndexNames have minimal overheads relative to standard in-memory data structures;
(2) The \IndexName policy simulator reliably models the behavior of a \IndexName; 
(3) In the short term, \IndexNames can out-perform standard in-memory data structures; 
(4) Concurrency introduces minimal overheads; and
(5) \IndexNames scale well with data, both in their access costs and their organizational costs.

\subsection{Experimental setup}

All experiments were run on a 2$\times$6-core 2.5 GHz Intel Xeon server with 198 GB of RAM and running Ubuntu 16.04 LTS.  Experimental code was written in C++ and compiled with GNU C++ 5.4.0.  
Each element in the data set is a pair of key and value, each an 8-Byte integer.
Unless otherwise noted, we use a data size of up to a maximum of $10^9$ records (16GB) with keys generated uniformly at random.
To mitigate experimental noise, we use srand() with an arbitrary but consistent value for all data generation.
To put our performance numbers into context, we compare against 
(1) \textbf{R/B Tree}: the C++ standard-template library (STL) \texttt{map} implementation (a classical red-black tree),
(2) \textbf{HashTable} the C++ standard-template library (STL) \texttt{unordered-map} implementation (a hash table), and
(3) \textbf{BTree} a publicly available implementation of b-trees\footnote{\url{https://github.com/JGRennison/cpp-btree}}.
For all three, we used the \texttt{find()} method for point lookups and \texttt{lower\_bound()}/\texttt{++} (where available) for range-scans.
For point lookups, we selected the target key uniformly at random\footnote{We also tested a heavy-hitter workload that queried for 30\% of the keyspace 80\% of the time, but found no significant differences between the workloads.}.
For range scans, we selected a start value uniformly at random and the end value to visit approximately 1000 records.
Except where noted, access times are the average of 1000 point lookups or 50 range scans.

We specifically evaluated \IndexNames using the Crack-Sort-Merge family of policies described in Section~\ref{sec:crack-or-sort}, varying the crack threshold over $10^6$, $10^7$, $10^8$, and $10^9$ records.  When there are exactly $10^9$ records, this last policy simply sorts the entire input in one step. 
For point lookups we use the \texttt{get()} access path, and for range scans we use the \texttt{ordered\_iterator()} access path. 
By default, we measure \IndexName read performance through a synchronous (i.e., with the worker thread paused) microbenchmark.
We contrast synchronous and asynchronous performance in \Cref{sec:threaded}.

Synchronous read performance was measured through a sequence of trials, each with a progressively larger number of transforms (i.e., a progressively larger fragment of the policy's trace) applied to the \IndexName.  
We measured total time to apply the trace fragment (including the cost of selecting which transforms to apply) before measuring access latencies.
For concurrent read performance a client thread measured access latency approximately once per second.

\begin{figure*}
\centering
  \begin{subfigure}{0.49\textwidth}
    \centering
    \includegraphics[width=0.9\textwidth]{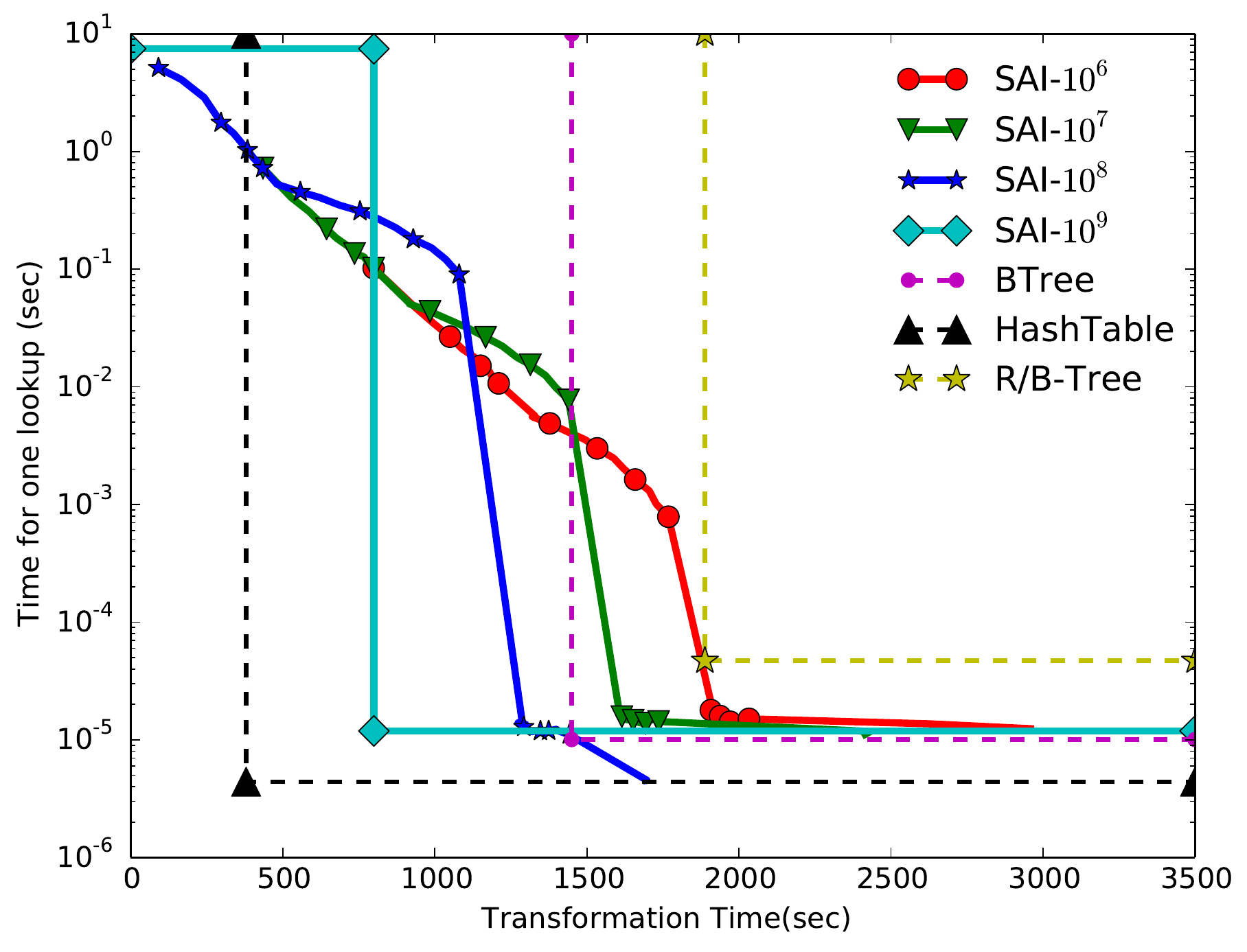}
    \caption{Point Lookups}
    \label{fig:StagedTransformslookupVspolicytime}
  \end{subfigure}
  \begin{subfigure}{0.49\textwidth}
    \centering
    \includegraphics[width=0.9\columnwidth]{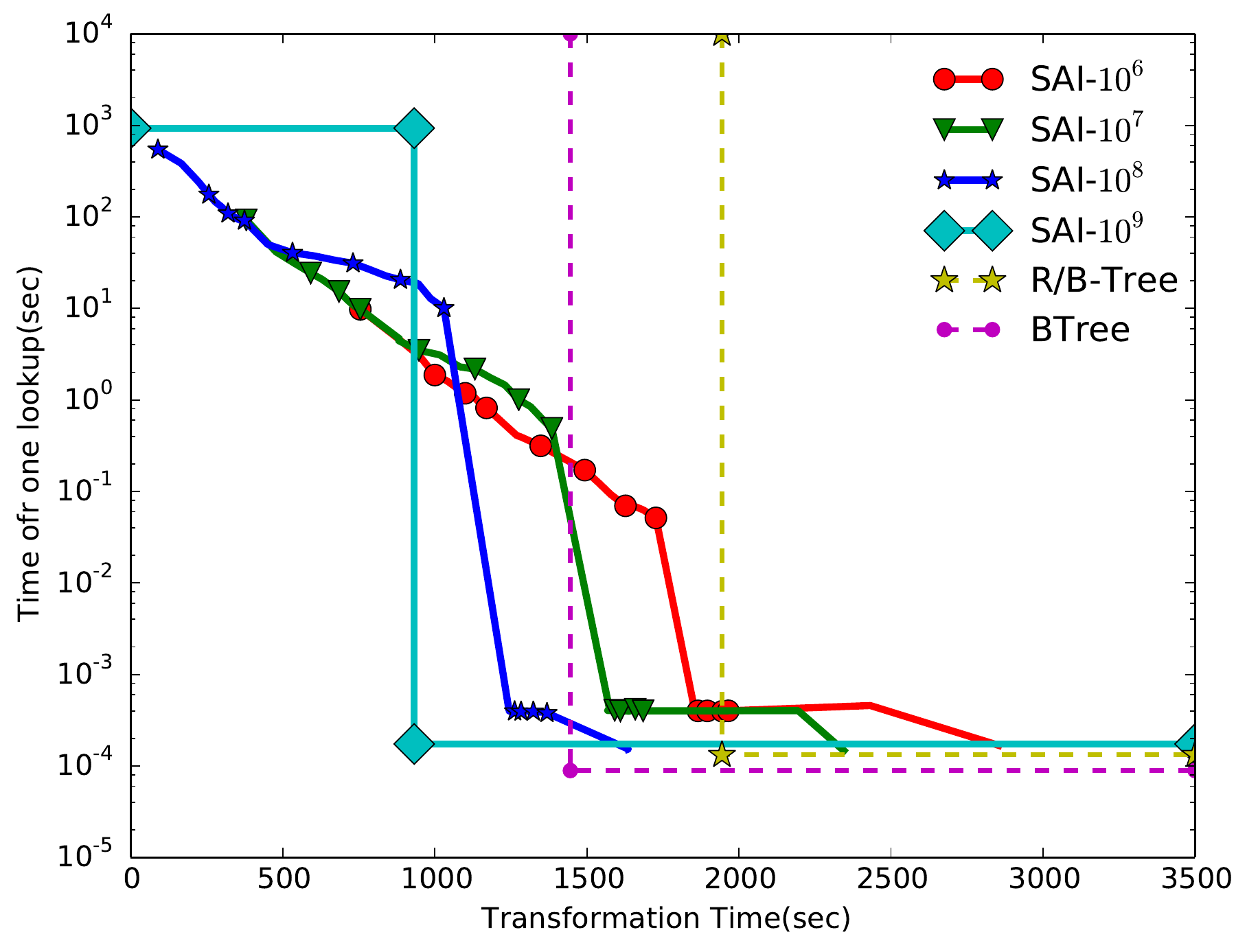}
    \caption{Range Scans}
    \label{fig:StagedTransformslookupVspolicytimerange-scans}
  \end{subfigure}
  \vspace{-3mm}
  \caption{Performance improvement over time as each \IndexName is organized}
  \trimfigurespacing
\end{figure*}

\subsection{Cost vs Benefit Over Time}

Our first set of experiments mirrors \Cref{fig:scalingComparison}, tracking the synchronous performance of point lookups and range scans over time.
The results are shown in \Cref{fig:StagedTransformslookupVspolicytime} and \Cref{fig:StagedTransformslookupVspolicytimerange-scans}
The x-axis shows time elapsed, while the y-axis shows index access latency at that point in time. 
In both sets of experiments, we include access latencies and setup time for the R/B-Tree (yellow star), the HashTable (black triangle), and the BTree (pink circles)
We treat the cost of accessing an incomplete data structure as infinite, stepping down to the structure's normal access costs once it is complete.

In general, lower crack thresholds achieve faster upfront performance by sacrificing long-term performance.
A crack threshold of $10^6$ (approximately $\frac{1}{10^5}$ cracked partitions) takes approximately twice as long to reach convergence as a threshold of $10^9$ (sort everything upfront)

Unsurprisingly, for point lookups the Hash Table has the best overall performance curve.  However, even it needs upwards of 6 minutes worth of data loading before it is ready.   
By comparison, a \IndexName starts off with a 10 second response time, and has dropped to under 3 seconds by the 3 minute mark.
The BTree significantly outperforms the R/B-Tree on both loading and point lookup cost, but still takes nearly 25 minutes to fully load.  By that point the Threshold$10^8$ policy \IndexName has already been serving point lookups with a comparable latency (after its sort phase) for nearly 5 minutes.  
Note that lower crack thresholds have a slightly slower peak performance than higher ones before their merge phase
This is a consequence of deeper tree structures and the indirection resulting from handles.
The performance at convergence of the $10^8$ threshold point scan trial is surprising, as it suggests binary search is as fast as a hash lookup.  
We suspect this due to lucky cache hits, but have not yet been able to confirm it.

\subsection{Simulated vs Actual Performance}
Figure~\ref{fig:TransformationTimeVsScanTimeEstimated} shows the result of using our simulator to predict the performance curves of Figure~\ref{fig:StagedTransformslookupVspolicytime}.  As can be seen, performance is comparable.  
Policy runtimes are replicated reliably, features like time to convergence and crossover are replicated virtually identically.

\begin{figure}
\centering
\includegraphics[width=0.9\columnwidth]{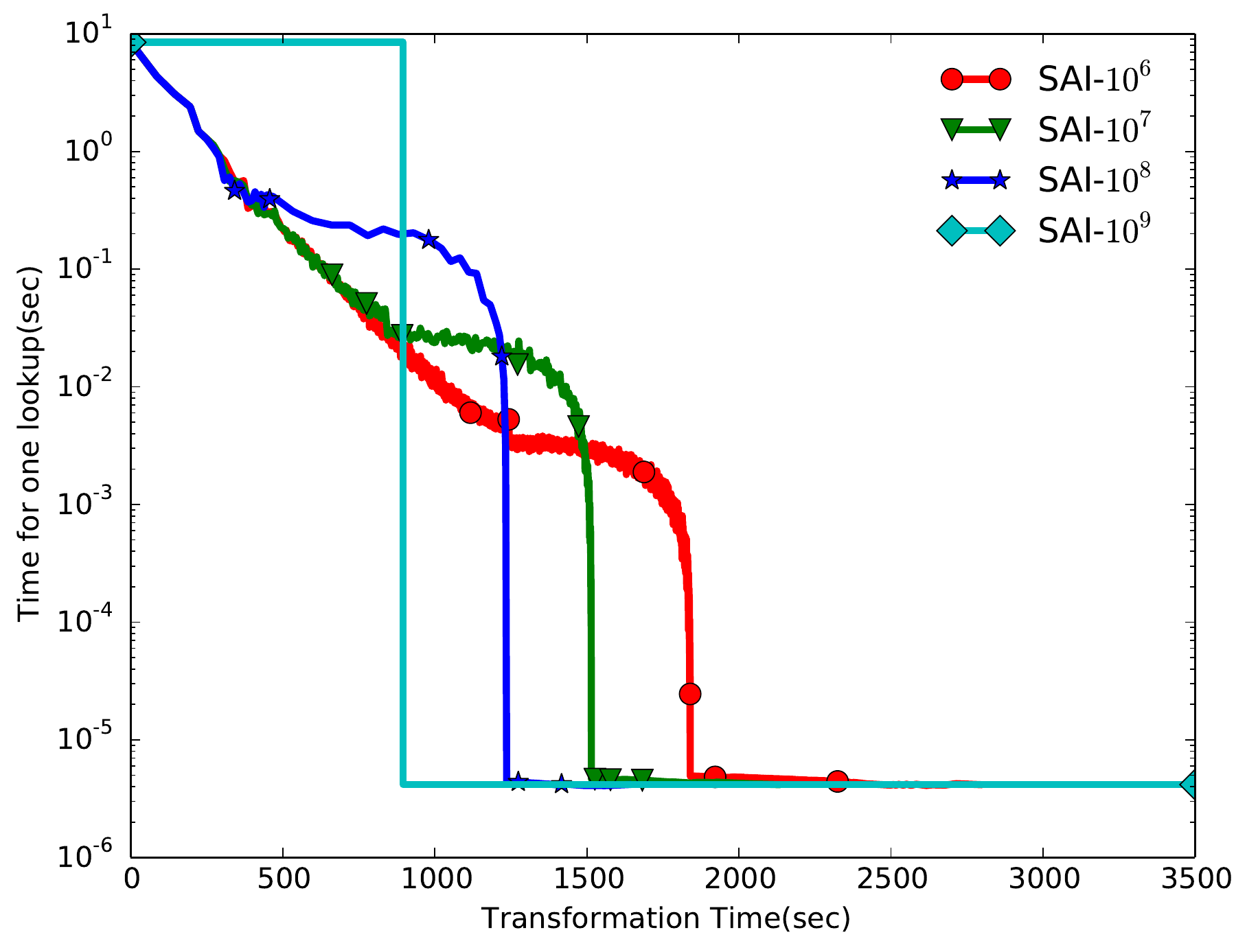}
\caption{Predicted performance using the simulator.}
\label{fig:TransformationTimeVsScanTimeEstimated}
\trimfigurespacing
\vspace{1mm}
\end{figure}

\subsection{Synchronous vs Concurrent}
\label{sec:threaded}
Figures~\ref{fig:threading6}, \ref{fig:threading7}, and \ref{fig:threading8} contrast the synchronous performance of the \IndexName with a more realistic concurrent workload.  
Performance during the crack phase is comparable, though admittedly with a higher variance.  
As expected, during the sort phase performance begins to bifurcate into fast-path accesses to already sorted arrays and slow-path scans over array nodes at the leaves.  

The time it takes the worker to converge is largely unaffected by the introduction of concurrency.
However, as the structure begins to converge, we see a constant $100\mu s$ overhead compared to synchronous access.  
We also note periodic $100ms$ bursts of latency during the sort phases of all trials.  
We believe these are caused when the worker thread pointer-swaps in a new array during the merge phase, as the entire newly created array is cold for the client thread.

\begin{figure*}
\centering
  \begin{subfigure}{0.33\textwidth}
    \centering
    \includegraphics[width=\textwidth]{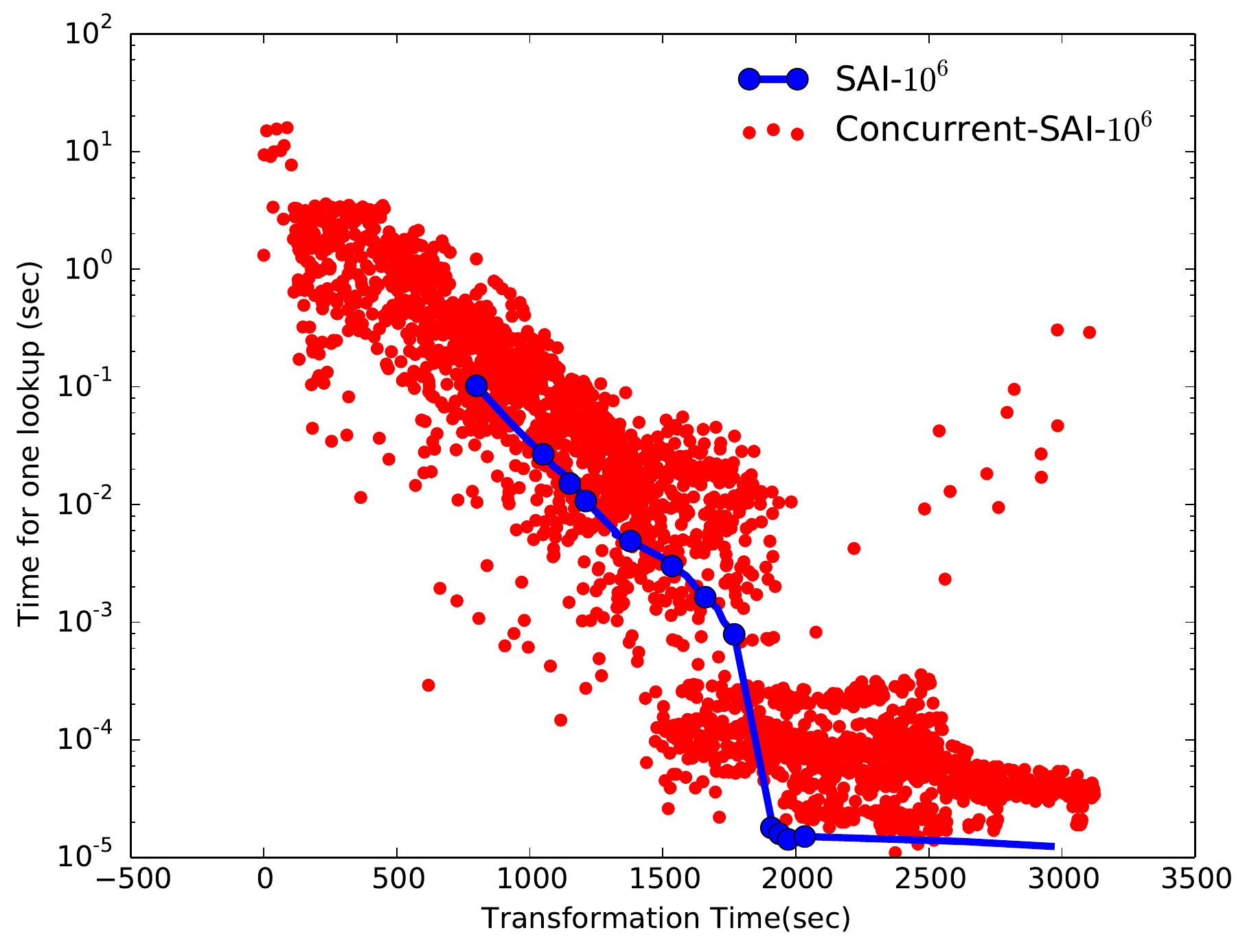}
    \caption{Crack Threshold = $10^6$}
    \label{fig:threading6}
  \end{subfigure}
  \begin{subfigure}{0.33\textwidth}
    \centering
    \includegraphics[width=\textwidth]{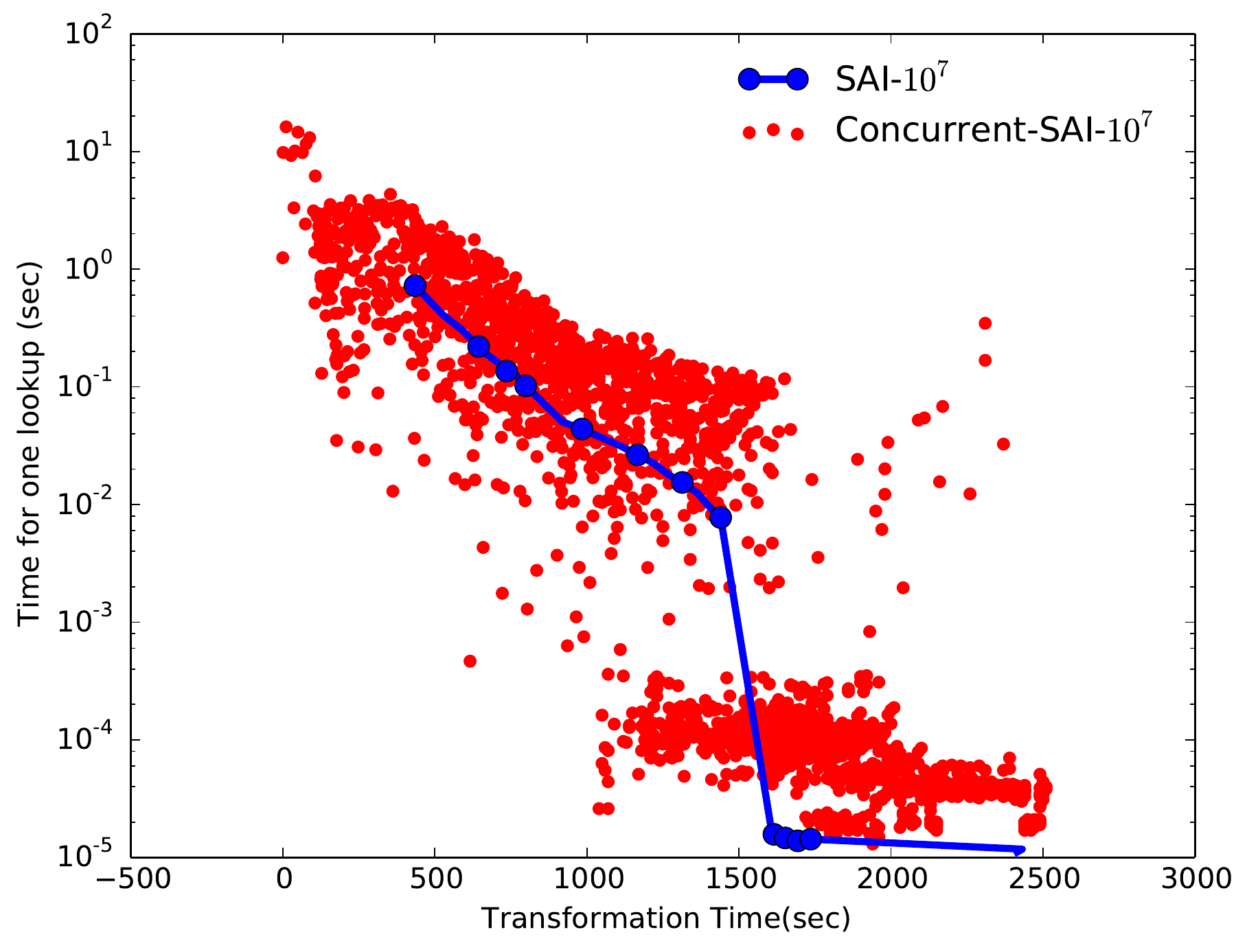}
    \caption{Crack Threshold = $10^7$}
    \label{fig:threading7}
  \end{subfigure}
  \begin{subfigure}{0.33\textwidth}
    \centering
    \includegraphics[width=\textwidth]{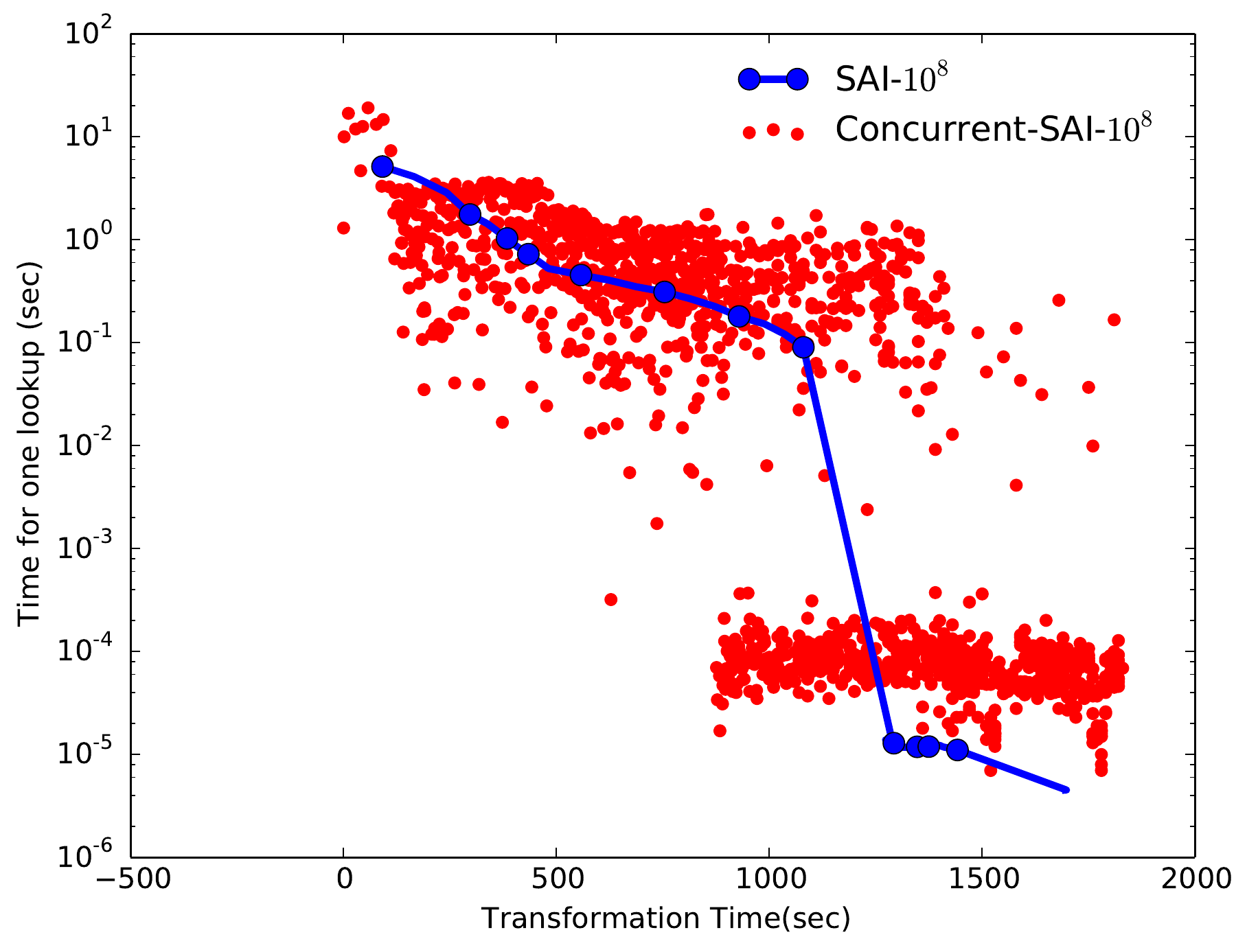}
    \caption{Crack Threshold = $10^8$}
    \label{fig:threading8}
  \end{subfigure}
\vspace{-2mm}
\caption{Synchronous vs Concurrent performance of the \IndexName on point lookups.}
\trimfigurespacing
\end{figure*}

\subsection{Short-Term Benefits for interactive workloads}
One of the primary benefits of \IndexNames is that they can provide significantly better performance during the transition period.
This is particularly useful in interactive settings where users pose tasks comparatively slowly.  
We next consider such a hypothetical scenario where a data file is loaded and each data structure is given a short period of time (5 seconds) to prepare.  
In these experiments, we use a cracking threshold of $10^5$ (our worst case), and vary the size of the data set from $10^6$ records (16MB) to $10^9$ records (16GB). 
The lookup time is the time until an answer is produced: the cost of a point lookup for the \IndexName.
The baseline data structures are accessible only once fully loaded, so we model the user waiting until the structure is ready before doing a point lookup.
Up through $10^7$ records, the \texttt{unordered\_map} completes loading within 5 seconds.  In every other case, the \IndexName is able to produce a response orders of magnitude faster.

\begin{figure}
\centering
\includegraphics[width=0.8\columnwidth]{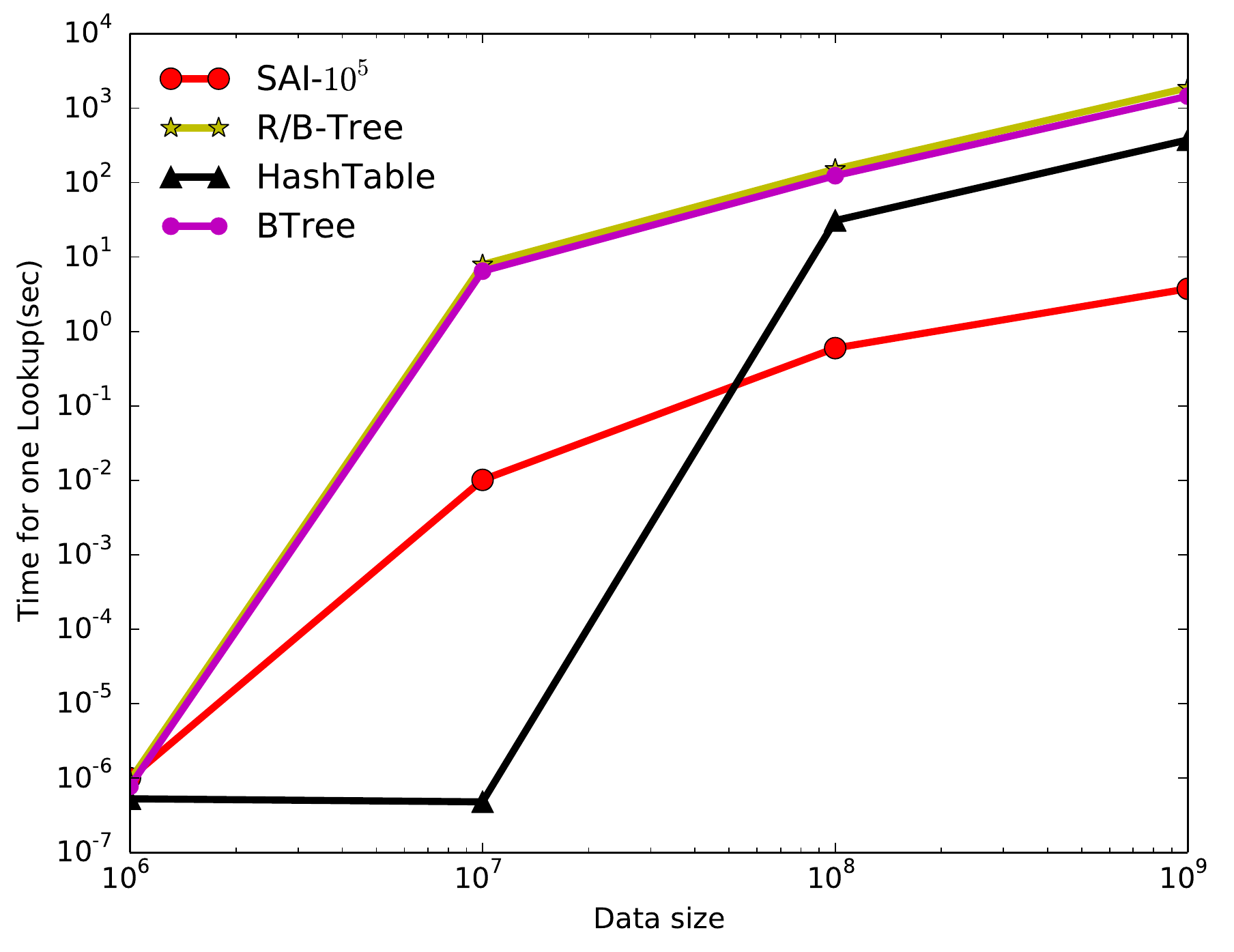}
\caption{Point lookup latency relative to data size.}
\label{fig:DataSizeVsScanTime}
\trimfigurespacing
\end{figure}

\subsection{DataSize Vs TransformTime}
Figure~\ref{fig:datasizevsloadtime} illustrates the scalability of \IndexName from the perspective of data loading.  
As before, we vary the size of the data set and use the time taken to load a comparable amount of data into the base data structures.
Note that data is accessible virtually immediately after being loaded into a \IndexName. 
We measure the cost for the \IndexName to reach convergence.
The performance of the \IndexName and the other data structures both scale linearly with the data size (note the log scale).

\begin{figure}
\centering
\includegraphics[width=0.8\columnwidth]{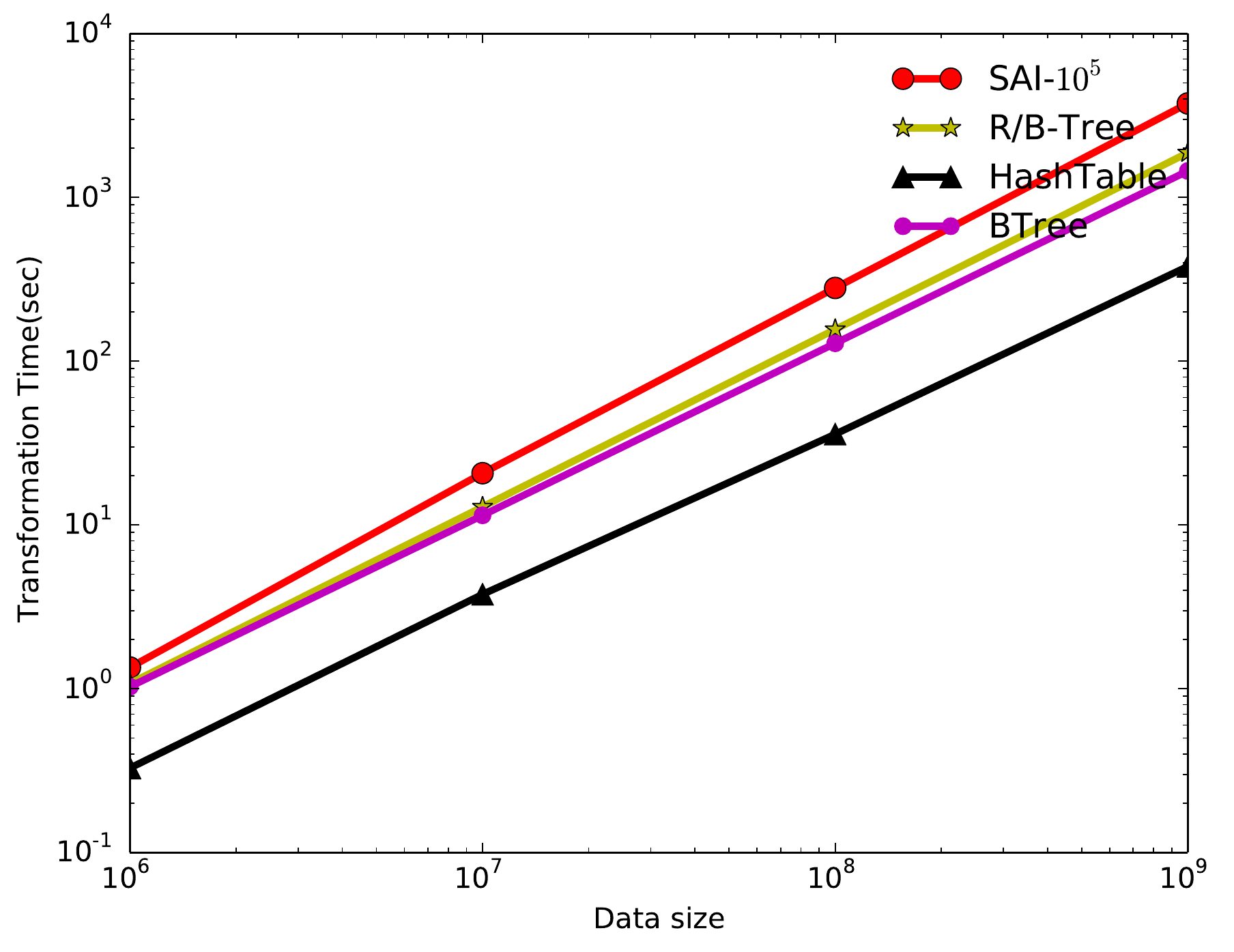}
\caption{The time required to load and fully organize a data set relative to data size}
\label{fig:datasizevsloadtime}
\trimfigurespacing
\end{figure}

\subsection{CrackThreshold Vs ScanTime}
Figure~\ref{fig:CrackThresholdVsScanTime} explores the effects of the crack threshold on performance at convergence of the Crack-Sort policy. The Merge Policy was excluded from testing as at convergence it would lead to one huge sorted array of size $10^9$ irrespective of the crack threshold.
In these set of experiments the crack threshold for cracking an array in the JITD structure was varied from $10^6$ to $10^9$.
For each, we performed one thousand point scans, measuring the total time and computing the average cost per scan.
This figure shows the overhead from handles --- at a crack threshold of $10^9$, the entire array is sorted in a single step.
As the crack threshold grows by a factor of 10, the depth of the tree increases by roughly a factor of three, necessitating approximately 3 additional random accesses via handles rather than directly on a sorted array, and as shown in the graph, increasing access time by roughly 1 $\mu$s.  

\begin{figure}
\centering
\includegraphics[width=0.9\columnwidth]{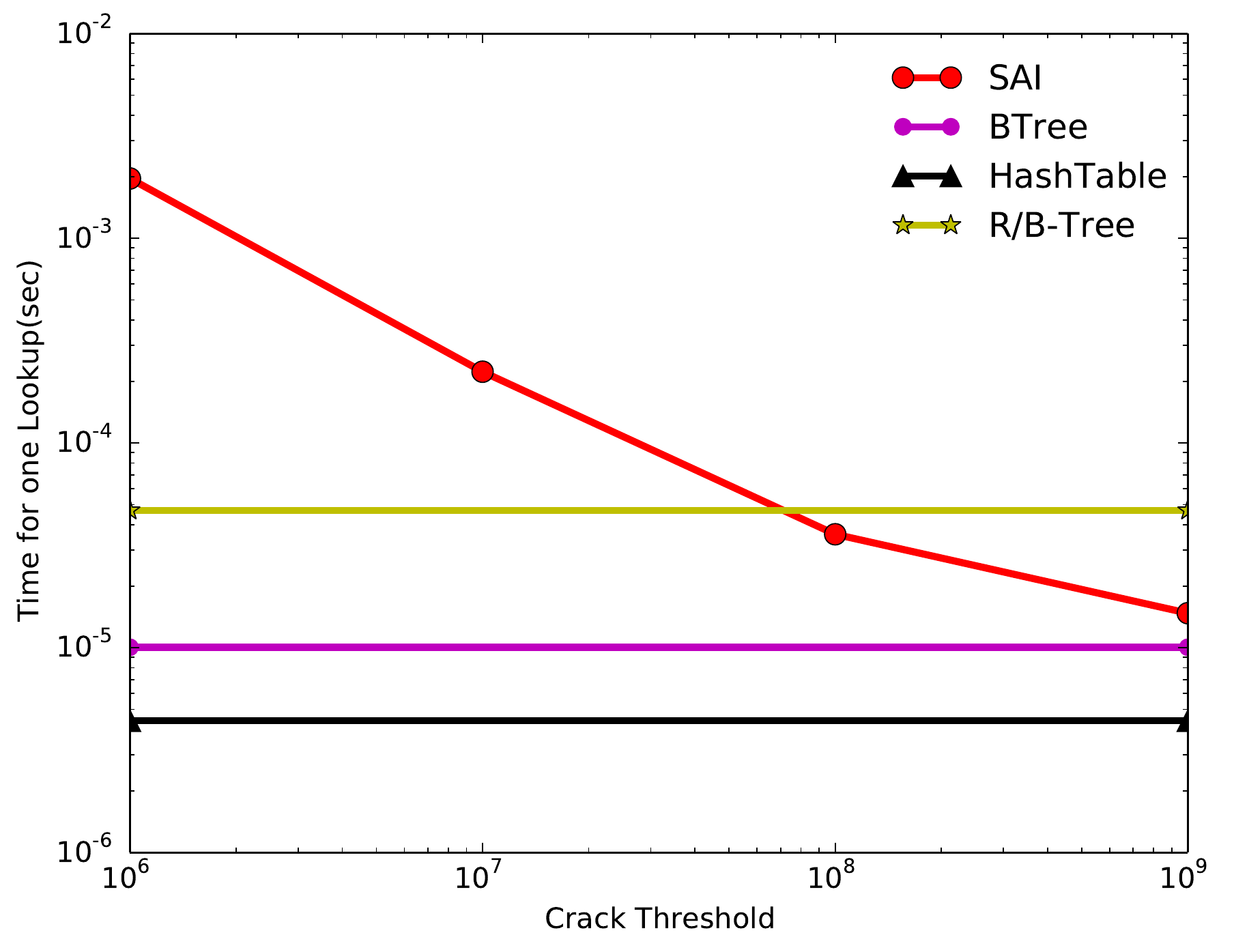}
\caption{For the Uniform workload each point on the graph indicates the change in scan time for different crack thresholds.}
\label{fig:CrackThresholdVsScanTime}
\trimfigurespacing
\end{figure} 

\section{Related Work}
\label{sec:rel_wrk}
\IndexNames specifically extend work by Kennedy and Ziarek on Just-in-Time Data Structures~\cite{DBLP:conf/cidr/KennedyZ15} with a framework for defining policies, tools for optimizing across families of policies, and a runtime that supports optimization in the background rather than as part of queries.
Most notably, this enables efficient dynamic data reorganization as an ongoing process rather than as an inline, blocking part of query execution.

Our goal is also spiritually similar to The Data Calculator~\cite{DBLP:conf/sigmod/IdreosZHKG18}.
Like our policy optimizer, it searches through a large space of index design choices for one suitable for a target workload.
However, in contrast to \IndexNames, this search happens once at compile time and explores mostly homogeneous structures.
In principle, the two approaches could be combined, using the Data Calculator to identify optimal structures for each workload and using \IndexNames to migrate between structures as the workload changes.

Also related is a recently proposed form of ``Resumable'' Index Construction~\cite{DBLP:journals/pvldb/AntonopoulosKTU17}.  
The primary challenge addressed by this work is ensuring that updates arriving after index construction begins are properly reflected in the index.
While we solve this problem (semi-)functional data structures, the authors propose the use of temporary buffers.

\tinysection{Adaptive Indexing}
\IndexNames are a form of adaptive indexing~\cite{DBLP:conf/edbt/IdreosMG12,DBLP:conf/tpctc/GraefeIKM10}, an approach to indexing that re-uses work done to answer queries to improve index organization.  
Examples of adaptive indexes include Cracker Indexes~\cite{DBLP:conf/cidr/IdreosKM07,DBLP:conf/sigmod/IdreosKM07}, Adaptive Merge Trees~\cite{DBLP:conf/edbt/GraefeK10}, SMIX~\cite{DBLP:conf/ssdbm/VoigtKL13}, and assorted hybrids thereof~\cite{DBLP:journals/pvldb/IdreosMKG11,DBLP:conf/cidr/KennedyZ15}.
Notably, a study by Schuhknecht et. al.~\cite{DBLP:journals/pvldb/SchuhknechtJD13} compares (among other things) the overheads of cracking to the costs of upfront indexing.  
Aiming to optimize overall runtime, upfront indexing begins to outperform cracker indexes after thousands to tens of thousands of queries.
By optimizing the index in the background, \IndexNames avoid the overheads of data reorganization \emph{as part of the query itself}.

\tinysection{Organization in the Background}
Unlike adaptive indexes, which inline organizational effort into normal database operations, several index structures are designed with background performance optimization in mind.
These begin with work in active databases~\cite{DBLP:books/mk/widomC96/WidomC96}, where reactions to database updates may be deferred until CPU cycles are available.
More recently, bLSM trees~\cite{DBLP:conf/sigmod/SearsR12} were proposed as a form of log-structured merge tree that coalesces partial indexes together in the background.
A wide range of systems including COLT~\cite{DBLP:conf/sigmod/SchnaitterAMP06}, OnlinePT~\cite{DBLP:conf/icde/BrunoC07}, and Peloton~\cite{DBLP:conf/cidr/PavloAALLMMMPQS17} use workload modeling to dynamically select, create, and destroy indexes, also in the background.

\tinysection{Self-Tuning Databases}
Database tuning advisors have existed for over two decades~\cite{DBLP:conf/sigmod/ChaudhuriN98,DBLP:conf/vldb/ChaudhuriN07}, automatically selecting indexes to match specific workloads.
However, with recent advances in machine learning technology, the area has seen significant recent activity, particularly in the context of index selection and design.
OtterTune~\cite{DBLP:conf/sigmod/AkenPGZ17} uses fine-grained workload modeling to predict opportunities for setting database tuning parameters, an approach complimentary to our own.

\tinysection{Generic Data Structure Models}
More spiritually similar to our work is The Data Calculator~\cite{DBLP:conf/sigmod/IdreosZHKG18}, which designs custom tree structures by searching through a space of dozens of parameters describing both tree and leaf nodes.
A similarly related effort uses small neural networks~\cite{DBLP:conf/sigmod/KraskaBCDP18} as a form of universal index structure by fitting a regression on the CDF of record keys in a sorted array.


\section{Conclusions and Future Work}
\label{sec:fut_wrk}

In this paper, we introduced \IndexNames a type of in-memory index that can incrementally morph its performance characteristics to adapt to changing workloads.
To accomplish this, we formalized a composable organizational grammar (\CogName) and a simple algebra over it.
We introduced a range of equivalence- and structure-preserving rewrite rules called transforms that serve as the basis of organizational policies that guide the transition from one performance envelope to another.
We described a simulation framework that enables efficient optimization of policy parameters.
Finally, we demonstrated that a \IndexName can be implemented with minimal overhead relative to classical in-memory index structures.

Our work leaves open several challenges.  
We have already identified three specific challenges in \Cref{sec:discussion}: New atoms, Atom synthesis, and DAG support.
Addressing each of these challenges would allow \CogName to capture a wide range of data structure semantics.
There are also several key areas where performance tuning is possible:
First, our use of reference-counted pointers also presents a performance bottleneck for high-contention workloads --- we plan to explore more active garbage-collection strategies.
Second, Handles are an extremely conservative realization of semi-functional data structures.
As a result, \IndexNames are a factor of 2 slower at convergence than other tree-based indexes.  
We expect that this performance gap can be reduced or eliminated by identifying situations where Handles are unnecessary (e.g., at convergence).
A final open challenge is the use of statistics to guide rewrite rules, both detecting workload shifts to trigger policy shifts (e.g., as in Peloton), as well as identifying statistics-driven policies that naturally converge to optimal behaviors for dynamic workloads.


\bibliographystyle{abbrv}
\bibliography{main,oliver}

\appendix

\section{Correctness of Example Transforms}
\label{apx:ExampleTransformCorrectness}

As a warm-up and an example of transform correctness, we next review each of the transforms given in Figure~\ref{fig:transformExamples} and prove the correctness of each.

\begin{proposition}[Identity is correct] 
\label{thm:identityIsCorrect}
Let $\IdentityTransform$ denote the identity transform $\IdentityTransform(C) = C$.  $\IdentityTransform$ is both equivalence preserving and structure preserving.
\end{proposition}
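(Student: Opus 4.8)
The plan is to verify both properties directly from the definitions, since the identity transform is the simplest possible case and the statement should follow almost immediately from unfolding the relevant definitions.

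First I would establish equivalence preservation. By the definition of an equivalence-preserving transform, I need to show $\forall C : C \IsLogicallyEquivalentTo \IdentityTransform(C)$. Since $\IdentityTransform(C) = C$ by definition, this reduces to showing $C \IsLogicallyEquivalentTo C$, i.e., $\ContentsOfCog{C} = \ContentsOfCog{C}$ (Definition~\ref{def:logicalEquivalence}). This is immediate by reflexivity of bag equality. So the first half is essentially a one-line argument.

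Next I would establish structure preservation. By the definition of a structure-preserving transform, I need to show $\forall C : \IsStructurallyCorrect{C} \implies \IsStructurallyCorrect{\IdentityTransform(C)}$. Again substituting $\IdentityTransform(C) = C$, the obligation becomes $\IsStructurallyCorrect{C} \implies \IsStructurallyCorrect{C}$, which holds trivially. Combining both parts, $\IdentityTransform$ is correct in the sense of Definition~\ref{def:correctTransform}.

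The main obstacle here is essentially nonexistent: the only subtlety is to be careful that both defining properties are genuinely checked against their formal definitions rather than waved away, so that the proposition serves its intended role as a clean warm-up example. No structural recursion over the grammar is needed, since the identity behaves uniformly on every instance and never inspects the atom type. I would therefore keep the argument to a few short lines, emphasizing that both properties collapse to reflexivity once $\IdentityTransform(C) = C$ is substituted.
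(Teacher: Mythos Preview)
Your proposal is correct and matches the paper's treatment: the paper states this proposition without proof, treating it as immediate, and your argument is exactly the natural unfolding showing that both defining conditions reduce to reflexivity once $\IdentityTransform(C) = C$ is substituted.
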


\begin{lemma}[$\SortTransform$ is correct] 
\label{thm:sortIsCorrect}
$\SortTransform$ is both equivalence preserving and structure preserving.
\end{lemma}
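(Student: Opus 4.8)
The plan is to establish the two correctness properties separately, in each case proceeding by inspection of the top-level atom $\TypeOfCogInstance{C}$ so as to mirror the two-branch structure of the definition of $\SortTransform$ itself. Whenever $C$ is not an $\ArrayAtom$, $\SortTransform$ acts as the identity, so both equivalence preservation and structure preservation follow immediately from Proposition~\ref{thm:identityIsCorrect}. The only substantive branch is $C = \ArrayAtom(\vec \RecordInstance)$ with $\vec \RecordInstance = \ArrayOf{\RecordInstance_1, \ldots, \RecordInstance_N}$, where $\SortTransform(C) = \SortedAtom(\SortFunction(\vec \RecordInstance))$, and I would concentrate the remaining argument there.

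For equivalence preservation, I would invoke the defining property of $\SortFunction$ recorded in Figure~\ref{fig:transformExamples}: it returns a transposition (i.e., a rearrangement) of its input. Since $\ContentsOfCog{\cdot}$ collapses both $\ArrayAtom$ and $\SortedAtom$ to the bag of their elements, and reordering an array leaves its underlying bag unchanged, we obtain $\ContentsOfCog{\ArrayAtom(\vec \RecordInstance)} = \BagOf{\RecordInstance_1, \ldots, \RecordInstance_N} = \ContentsOfCog{\SortedAtom(\SortFunction(\vec \RecordInstance))}$. By Definition~\ref{def:logicalEquivalence} this is exactly $C \IsLogicallyEquivalentTo \SortTransform(C)$, and the non-array branch is trivial, so equivalence preservation holds on all instances.

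For structure preservation, I would first observe that the antecedent $\IsStructurallyCorrect{C}$ holds vacuously in the array branch, since by Case~1 of Definition~\ref{def:structuralcorrectness} every $\ArrayAtom$ instance is structurally correct. It then remains only to discharge the consequent $\IsStructurallyCorrect{\SortedAtom(\SortFunction(\vec \RecordInstance))}$. By Case~\ref{def:sortedIsStructurallyCorrect} of Definition~\ref{def:structuralcorrectness} this amounts to the output array being in non-decreasing key order under $\KeyOrder$, which is precisely the guarantee of $\SortFunction$. In the non-array branch the implication is again immediate from Proposition~\ref{thm:identityIsCorrect}.

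I expect no genuine obstacle here, as the argument is routine structural case analysis; the proof hinges entirely on the two advertised properties of $\SortFunction$ (that it permutes, and that it orders). The one point warranting care is that the structural-correctness constraint for $\SortedAtom$ is stated with the non-strict order $\KeyOrder$ rather than $\KeyOrderStrict$, so records sharing a key do not violate it — this is what lets a merely $\KeyOrder$-sorted transposition suffice, and I would make sure to flag it explicitly rather than silently assuming strictly increasing keys.
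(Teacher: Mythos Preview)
Your proposal is correct and follows essentially the same approach as the paper: case-split on whether $C$ is an $\ArrayAtom$, dispatch the non-array branch via Proposition~\ref{thm:identityIsCorrect}, and in the array branch use that $\SortFunction$ is a permutation (for equivalence) and that its output is $\KeyOrder$-sorted (for structure preservation). Your treatment is in fact slightly more explicit than the paper's in noting that the antecedent $\IsStructurallyCorrect{C}$ is automatic for $\ArrayAtom$ and in flagging the non-strict order, but these are refinements of the same argument rather than a different route.
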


\begin{proof}
For any instance $C$ where $\TypeOfCogInstance{C} \neq \ArrayAtom$, correctness follows from \Cref{thm:identityIsCorrect}.

Otherwise $C = \ArrayAtom(\ArrayOf{r_1, \ldots, r_N})$, and consequently $\SortTransform(C) = \SortedAtom(\SortFunction(\ArrayOf{r_1, \ldots, r_N}))$.  
To show correctness we first need to prove that
{\small
$$\ContentsOfCog{\ArrayAtom(\ArrayOf{r_1, \ldots, r_N})} = 
  \ContentsOfCog{\SortedAtom(\SortFunction(\ArrayOf{r_1, \ldots, r_N}))}$$
}
Let the one-to-one (hence invertable) function $f : [1,N] \rightarrow [1,N]$ denote the transposition applied by $\SortFunction$.
{\small
\\[2mm]$\ContentsOfCog{\SortedAtom(\SortFunction(\ArrayOf{r_1, \ldots, r_N}))}$
\begin{eqnarray*}
&&= \ContentsOfCog{\SortedAtom(\ArrayOf{r_{f^{-1}(1)}, \ldots, r_{f^{-1}(N)}}))}\\
&&= \BagOf{r_{f^{-1}(1)}, \ldots, r_{f^{-1}(N)}} \\
&&= \BagOf{r_1, \ldots, r_N}\\
&&= \ContentsOfCog{\ArrayAtom(\ArrayOf{r_1, \ldots, r_N})}
\end{eqnarray*}
}
\noindent giving us equivalence preservation.  Structure preservation requires that $\ArrayOf{r_{f^{-1}(1)}, \ldots, r_{f^{-1}(N)}}$ be in sorted order, which it is by construction.  Thus, $\SortTransform$ is a correct transform.
\end{proof}








\begin{lemma}[$\UnsortTransform$ is correct] 
\label{thm:unsortIsCorrect}
$\UnsortTransform$ is both equivalence preserving and structure preserving.
\end{lemma}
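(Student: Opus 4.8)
The plan is to mirror the proof of \Cref{thm:sortIsCorrect}, but the argument is even more direct since $\UnsortTransform$ performs no reordering whatsoever. First I would dispatch the non-matching case: for any instance $C$ with $\TypeOfCogInstance{C} \neq \SortedAtom$, the transform behaves as the identity, so both correctness properties follow immediately from \Cref{thm:identityIsCorrect}.

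The remaining case is $C = \SortedAtom(\ArrayOf{r_1, \ldots, r_N})$, for which $\UnsortTransform(C) = \ArrayAtom(\ArrayOf{r_1, \ldots, r_N})$. For equivalence preservation I would simply read off the definition of $\ContentsOfCog{\cdot}$: both the $\SortedAtom$ and the $\ArrayAtom$ cases evaluate to the same bag $\BagOf{r_1, \ldots, r_N}$, so $\ContentsOfCog{C} = \ContentsOfCog{\UnsortTransform(C)}$. Crucially, no appeal to the transposition function of \Cref{thm:sortIsCorrect} is needed, because the array order is preserved verbatim and $\ContentsOfCog{\cdot}$ forgets order anyway.

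For structure preservation I would invoke Case 1 of \Cref{def:structuralcorrectness}: every $\ArrayAtom$ instance is structurally correct unconditionally. Hence $\IsStructurallyCorrect{\UnsortTransform(C)}$ holds regardless of whether $C$ was structurally correct, which in particular yields the required implication $\IsStructurallyCorrect{C} \implies \IsStructurallyCorrect{\UnsortTransform(C)}$. Combining both properties gives $\TransformIsCorrect{\UnsortTransform}$.

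There is essentially no hard step here, and that is itself the point worth flagging: $\UnsortTransform$ discards the sortedness invariant while retaining the underlying sequence, yet this never threatens structure preservation precisely because $\ArrayAtom$ imposes no ordering constraint. The only place one might expect friction — proving the result stays well-formed after dropping an invariant — evaporates because relaxing to the weaker $\ArrayAtom$ atom is always safe.
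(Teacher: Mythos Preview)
Your proof is correct and follows essentially the same approach as the paper: dispatch the non-$\SortedAtom$ case via \Cref{thm:identityIsCorrect}, then for the $\SortedAtom$ case observe that both sides of $\ContentsOfCog{\cdot}$ yield the identical bag $\BagOf{r_1,\ldots,r_N}$, and appeal to the unconditional structural correctness of $\ArrayAtom$ for structure preservation. Your additional commentary on why dropping the sortedness invariant is harmless is accurate but not needed for the argument.
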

\begin{proof}
For any instance $C$ where $\TypeOfCogInstance{C} \neq \SortedAtom$, correctness follows from \Cref{thm:identityIsCorrect}.

Otherwise $C = \SortedAtom(\ArrayOf{r_1 \ldots r_N})$ and we need to show first that $\ContentsOfCog{\SortedAtom(\ArrayOf{r_1 \ldots r_N})} = \ContentsOfCog{\ArrayAtom(\ArrayOf{r_1 \ldots r_N})}$.
The logical contents of both are $\BagOf{r_1 \ldots r_N}$, so we have equivalence.
Structure preservation is a given since any $\ArrayAtom$ instance is structurally correct.
\end{proof}


\begin{lemma}[$\DivideTransform$ is correct] 
\label{thm:divideIsCorrect}
$\DivideTransform$ is both equivalence preserving and structure preserving.
\end{lemma}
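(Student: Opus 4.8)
The plan is to follow the same two-case template used in the proofs of \Cref{thm:sortIsCorrect} and \Cref{thm:unsortIsCorrect}. First I would dispatch the trivial case: for any instance $C$ with $\TypeOfCogInstance{C} \neq \ArrayAtom$, the transform acts as the identity, so correctness follows immediately from \Cref{thm:identityIsCorrect}. The only substantive case is $C = \ArrayAtom(\ArrayOf{\RecordInstance_1 \ldots \RecordInstance_N})$, where $\DivideTransform(C) = \ConcatAtom(\ArrayAtom(\ArrayOf{\RecordInstance_1 \ldots \RecordInstance_{\FloorOf{\frac{N}{2}}}}), \ArrayAtom(\ArrayOf{\RecordInstance_{\FloorOf{\frac{N}{2}}+1} \ldots \RecordInstance_N}))$, and the remaining work is to verify the two preservation properties on this output.

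For equivalence preservation, I would unfold the definition of $\ContentsOfCog{\cdot}$ on the $\ConcatAtom$ output to obtain $\BagOf{\RecordInstance_1, \ldots, \RecordInstance_{\FloorOf{\frac{N}{2}}}} \uplus \BagOf{\RecordInstance_{\FloorOf{\frac{N}{2}}+1}, \ldots, \RecordInstance_N}$, and observe that this bag union recombines to exactly $\BagOf{\RecordInstance_1, \ldots, \RecordInstance_N} = \ContentsOfCog{C}$. No record is lost or duplicated because the index ranges $[1, \FloorOf{\frac{N}{2}}]$ and $[\FloorOf{\frac{N}{2}}+1, N]$ partition $[1,N]$, so the equality holds for every $N$, including the degenerate cases $N \in \{0,1\}$ where one half is empty. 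For structure preservation, I would appeal directly to \Cref{def:structuralcorrectness}: each child of the output is an $\ArrayAtom$ and hence structurally correct by Case 1, so the enclosing $\ConcatAtom$ is structurally correct by Case 2; the implication $\IsStructurallyCorrect{C} \implies \IsStructurallyCorrect{\DivideTransform(C)}$ therefore holds (in fact unconditionally, since the output is always structurally correct regardless of the input).

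I expect no real obstacle here; the only point requiring a moment of care is confirming that the floor-based split yields a genuine partition of the record indices for all $N$, so that the bag union neither drops nor repeats any record. Once that is observed, both preservation properties follow mechanically, establishing $\TransformIsCorrect{\DivideTransform}$.
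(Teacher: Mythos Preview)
Your proposal is correct and follows essentially the same approach as the paper's proof: dispatch the non-$\ArrayAtom$ case via \Cref{thm:identityIsCorrect}, then for the $\ArrayAtom$ case unfold $\ContentsOfCog{\cdot}$ on the $\ConcatAtom$ output to recover the original bag, and conclude structure preservation from Cases~1 and~2 of \Cref{def:structuralcorrectness}. Your explicit mention of the degenerate cases $N \in \{0,1\}$ is a small bit of extra care beyond what the paper writes, but the argument is otherwise identical.
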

\begin{proof}
For any instance $C$ where $\TypeOfCogInstance{C} \neq \ArrayAtom$, correctness follows from \Cref{thm:identityIsCorrect}.

Otherwise $C = \ArrayAtom(\ArrayOf{r_1 \ldots r_N})$ and we need to show first that 
{\footnotesize
\begin{multline*}
\ContentsOfCog{\ArrayAtom(\ArrayOf{r_1 \ldots r_N})} =\\
\ContentsOfCog{\ConcatAtom\left(
    \ArrayAtom(\ArrayOf{r_1 \ldots r_{\FloorOf{\frac{N}{2}}}}), 
    \ArrayAtom(\ArrayOf{r_{\FloorOf{\frac{N}{2}}+1} \ldots r_N})\right )}
\end{multline*}
}
Evaluating the right hand side of the equation recursively and simplifying, we have
\begin{eqnarray*}
&=& \BagOf{r_1 \ldots r_{\FloorOf{\frac{N}{2}}}} \uplus 
    \BagOf{r_{\FloorOf{\frac{N}{2}}+1} \ldots r_N}\\
&=& \BagOf{r_1 \ldots r_{\FloorOf{\frac{N}{2}}}, r_{\FloorOf{\frac{N}{2}}+1} \ldots r_N}\\
&=& \BagOf{r_1 \ldots r_N} \;\;=\;\;\ContentsOfCog{\ArrayAtom(\ArrayOf{r_1 \ldots r_N})}
\end{eqnarray*}
Hence we have equivalence preservation.  The $\ArrayAtom$ instances are always structurally correct and $\ConcatAtom$ instances are structurally correct if their children are, so we have structural preservation as well. Hence, $\DivideTransform$ is correct.
\end{proof}


\begin{lemma}[$\CrackTransform$ is correct]
\label{thm:crackIsCorrect}
$\CrackTransform$ is both equivalence preserving and structure preserving.
\end{lemma}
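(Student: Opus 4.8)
The plan is to reuse the case-split template from the proofs of $\SortTransform$ and $\DivideTransform$. First I would dispatch the trivial branch: whenever $\TypeOfCogInstance{C} \neq \ArrayAtom$ the transform acts as the identity, so correctness is immediate from \Cref{thm:identityIsCorrect}. All the work is in the remaining branch, where $C = \ArrayAtom(\ArrayOf{\RecordInstance_1 \ldots \RecordInstance_N})$. I would fix notation by writing $p = \RecordInstance_{\FloorOf{\frac{N}{2}}}$ for the pivot record and $\KeyInstance = \KeyOf{p}$ for its key, so that $\CrackTransform(C) = \BTreeAtom(\KeyInstance, L, R)$ with $L = \ArrayAtom(\ArrayOf{\RecordInstance_i \mid \RecordInstance_i \KeyOrderStrict p})$ and $R = \ArrayAtom(\ArrayOf{\RecordInstance_i \mid p \KeyOrder \RecordInstance_i})$.

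For equivalence preservation I would expand $\ContentsOfCog{\BTreeAtom(\KeyInstance, L, R)} = \ContentsOfCog{L} \uplus \ContentsOfCog{R} = \BagOf{\RecordInstance_i \mid \RecordInstance_i \KeyOrderStrict p} \uplus \BagOf{\RecordInstance_i \mid p \KeyOrder \RecordInstance_i}$ using the definition of $\mathbb{D}$, and then argue that these two sub-bags exactly partition $\BagOf{\RecordInstance_1 \ldots \RecordInstance_N}$. This is the only non-routine step, and the one I expect to be the crux: it rests on the totality of $\KeyOrder$, which guarantees that each record $\RecordInstance_i$ satisfies exactly one of $\RecordInstance_i \KeyOrderStrict p$ or $p \KeyOrder \RecordInstance_i$ (by trichotomy, and because $\KeyOrderStrict$ is the strict part of $\KeyOrder$). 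Consequently every record is routed to exactly one child, nothing is dropped or duplicated, and the bag union recovers $\ContentsOfCog{C}$, giving equivalence. Note that unlike $\SortTransform$ there is no transposition function to invert here, so this step is genuinely simpler than the sorting case.

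For structure preservation I would check the four conditions of the $\BTreeAtom$ clause of \Cref{def:structuralcorrectness} for $\BTreeAtom(\KeyInstance, L, R)$. The children $L$ and $R$ are $\ArrayAtom$ instances, hence structurally correct with no hypotheses on the input (the $\ArrayAtom$ clause), so the two recursive requirements hold unconditionally. The separator conditions $\forall \RecordInstance_1 \in \ContentsOfCog{L} : \RecordInstance_1 \KeyOrderStrict \KeyInstance$ and $\forall \RecordInstance_2 \in \ContentsOfCog{R} : \KeyInstance \KeyOrder \RecordInstance_2$ hold directly by construction of $L$ and $R$, recalling the paper's convention that $\RecordInstance \KeyOrderStrict \KeyInstance$ abbreviates $\KeyOf{\RecordInstance} \KeyOrderStrict \KeyInstance$; hence $\CrackTransform$ is structure preserving, and therefore correct. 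I would flag one cosmetic subtlety worth a sentence, since keys may be non-unique: the pivot $p$ and any key-ties with $\KeyInstance$ are consistently routed to $R$ (because $p \KeyOrder p$), so $R$ is never empty while $L$ may be. This has no bearing on correctness but explains the asymmetry between the strict $\KeyOrderStrict$ and non-strict $\KeyOrder$ sides of the partition.
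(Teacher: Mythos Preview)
Your proposal is correct and follows essentially the same route as the paper's proof: the identical case split on $\TypeOfCogInstance{C}$, the same expansion of $\ContentsOfCog{\BTreeAtom(\KeyInstance,L,R)}$ into the bag union, the same appeal to totality of $\KeyOrder$ to conclude that $(\RecordInstance_i \KeyOrderStrict \KeyInstance) \vee (\KeyInstance \KeyOrder \RecordInstance_i)$ exhausts all records, and the same ``structurally correct by construction'' argument for the $\BTreeAtom$ node. Your version is more explicit than the paper's (which compresses structure preservation to a single sentence), and your side remark about key-ties being routed to $R$ is a nice clarification the paper omits, but the underlying argument is the same.
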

\begin{proof}
For any instance $C$ where $\TypeOfCogInstance{C} \neq \ArrayAtom$, correctness follows from \Cref{thm:identityIsCorrect}.

Otherwise $C = \ArrayAtom(\ArrayOf{r_1 \ldots r_N})$ and we need to show first that 
{\footnotesize
\begin{equation*}\begin{split}
&\ContentsOfCog{\ArrayAtom(\ArrayOf{r_1 \ldots r_N})} =\\
&\;\;\ContentsOfCog{\BTreeAtom\left(\KeyInstance,
    \ArrayAtom(\ArrayOf{\;r_i\;\big |\; r_i \KeyOrderStrict \KeyInstance\;}), 
    \ArrayAtom(\ArrayOf{\;r_i\;\big |\; \KeyInstance \KeyOrder r_i\;})\right )}
\end{split}\end{equation*}
}
\noindent Here $\KeyInstance = \KeyOf{r_{i}}$ for an arbitrary $i$.  
Evaluating the right hand side of the equation recursively and simplifying, we have
\begin{eqnarray*}
&=& \BagOf{\;r_i\;\big |\; r_i \KeyOrderStrict \KeyInstance\;} \uplus 
    \BagOf{\;r_i\;\big |\; \KeyInstance \KeyOrder r_i\;}\\
&=& \BagOf{\;r_i\;\big |\; \left(r_i \KeyOrderStrict \KeyInstance\right) \vee \left(\KeyInstance \KeyOrder r_i\right) \;}\\
&=& \BagOf{r_1 \ldots r_N} \;\;=\;\;\ContentsOfCog{\ArrayAtom(\ArrayOf{r_1 \ldots r_N})}
\end{eqnarray*}
Instances of $\ArrayAtom$ are always structurally correct. The newly created $\BTreeAtom$ instance is structurally correct by construction.  Thus $\CrackTransform$ is correct.
\end{proof}


\begin{lemma}[$\MergeTransform$ is correct]
\label{thm:mergeIsCorrect}
$\MergeTransform$ is both equivalence preserving and structure preserving.
\end{lemma}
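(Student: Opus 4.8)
The plan is to reuse the case-analysis template established by the preceding lemmas (e.g. the proofs that $\SortTransform$ and $\DivideTransform$ are correct): first dispatch the identity case, then verify equivalence- and structure-preservation for each of the two genuinely rewriting cases. For any instance $C$ matching neither the $\ConcatAtom$ pattern nor the $\BTreeAtom$ pattern, $\MergeTransform$ acts as the identity, so correctness follows immediately from Proposition~\ref{thm:identityIsCorrect}. This leaves the two rewriting cases, which I would treat in parallel since their right-hand sides are textually identical.

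For equivalence preservation I would expand $\ContentsOfCog{\cdot}$ on both sides. In the $\ConcatAtom$ case, $C = \ConcatAtom(\ArrayAtom(\ArrayOf{\RecordInstance_1 \ldots \RecordInstance_N}), \ArrayAtom(\ArrayOf{\RecordInstance_{N+1} \ldots \RecordInstance_M}))$ unfolds to $\BagOf{\RecordInstance_1 \ldots \RecordInstance_N} \uplus \BagOf{\RecordInstance_{N+1} \ldots \RecordInstance_M}$, which equals $\BagOf{\RecordInstance_1 \ldots \RecordInstance_M}$, and this is exactly $\ContentsOfCog{\ArrayAtom(\ArrayOf{\RecordInstance_1 \ldots \RecordInstance_M})}$, the content of the output. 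The $\BTreeAtom$ case is handled identically, because $\ContentsOfCog{\BTreeAtom(\Wildcard, C_1, C_2)}$ and $\ContentsOfCog{\ConcatAtom(C_1, C_2)}$ share the same defining clause $\ContentsOfCog{C_1} \uplus \ContentsOfCog{C_2}$; the separator key is discarded by $\ContentsOfCog{\cdot}$ (hence the wildcard argument), so it plays no role in the computation.

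For structure preservation I would observe that in both rewriting cases the output is an $\ArrayAtom$ instance, which is structurally correct unconditionally by the first case of Definition~\ref{def:structuralcorrectness}. Hence $\IsStructurallyCorrect{\MergeTransform(C)}$ holds regardless of the input, so the implication $\IsStructurallyCorrect{C} \implies \IsStructurallyCorrect{\MergeTransform(C)}$ is trivially satisfied. Combining both properties across all three cases yields $\TransformIsCorrect{\MergeTransform}$ by Definition~\ref{def:correctTransform}.

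I do not anticipate a substantive obstacle; the proof is routine. The only subtlety worth flagging is that $\MergeTransform$ does not require its input to be structurally correct --- in the $\BTreeAtom$ case, the two child arrays need not respect any ordering relative to the (discarded) separator --- because the result is an unsorted $\ArrayAtom$ that imposes no semantic constraint. This is precisely why structure preservation is one-directional and why merging is always safe, even though it collapses and discards the guarantees of the $\BTreeAtom$ it consumes.
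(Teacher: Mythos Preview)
Your proposal is correct and follows essentially the same approach as the paper's proof: dispatch the identity case via Proposition~\ref{thm:identityIsCorrect}, handle equivalence preservation for both rewriting cases by unfolding $\ContentsOfCog{\cdot}$ (the paper phrases this as ``the proof of \Cref{thm:divideIsCorrect} applied in reverse'' and notes $\BTreeAtom(\Wildcard,C_1,C_2)\IsLogicallyEquivalentTo\ConcatAtom(C_1,C_2)$, but it is the same computation you write out), and conclude structure preservation from the fact that any $\ArrayAtom$ instance is structurally correct.
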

\begin{proof}
For any instance $C$ that matches neither of $\MergeTransform$'s cases, correctness follows from \Cref{thm:identityIsCorrect}.
Of the remaining two cases, we first consider
$$C = \ConcatAtom(\ArrayAtom(\ArrayOf{\RecordInstance_1 \ldots \RecordInstance_N}), \ArrayAtom(\ArrayOf{\RecordInstance_{N+1} \ldots \RecordInstance_M}))$$
The proof of equivalence preservation is identical to that of Theorem~\ref{thm:divideIsCorrect} applied in reverse.  In the second case
{\small
$$C = \BTreeAtom(\Wildcard, \ArrayAtom(\ArrayOf{\RecordInstance_1 \ldots \RecordInstance_N}), \ArrayAtom(\ArrayOf{\RecordInstance_{N+1} \ldots \RecordInstance_M}))$$
}
Noting that $\BTreeAtom(\Wildcard, C_1, C_2) \IsLogicallyEquivalentTo \ConcatAtom(C_1, C_2)$ by the definition of logical contents, the proof of equivalence preservation is again identical to that of Theorem~\ref{thm:divideIsCorrect} applied in reverse.
For both cases, structural preservation is given by the fact that $\ArrayAtom$ is always structurally correct.  Thus $\MergeTransform$ is correct.
\end{proof}


\begin{lemma}[$\PivotLeftTransform$ is correct]
\label{thm:pivotLeftIsCorrect}
$\PivotLeftTransform$ is both equivalence preserving and structure preserving.
\end{lemma}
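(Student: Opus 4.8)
The plan is to follow the template established by the earlier correctness proofs (e.g.\ \Cref{thm:divideIsCorrect}), treating equivalence preservation and structure preservation separately and, within each, splitting on the two non-identity cases of $\PivotLeftTransform$. For any $C$ matching neither case we have $\PivotLeftTransform(C) = C$, so correctness is immediate from \Cref{thm:identityIsCorrect}; only the $\ConcatAtom$ rotation and the $\BTreeAtom$ rotation need argument.

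Equivalence preservation in both cases reduces to associativity of bag union. First, in the $\ConcatAtom$ case I would expand $\ContentsOfCog{\ConcatAtom(C_1, \ConcatAtom(C_2, C_3))} = \ContentsOfCog{C_1} \uplus (\ContentsOfCog{C_2} \uplus \ContentsOfCog{C_3})$ and $\ContentsOfCog{\ConcatAtom(\ConcatAtom(C_1, C_2), C_3)} = (\ContentsOfCog{C_1} \uplus \ContentsOfCog{C_2}) \uplus \ContentsOfCog{C_3}$, which coincide since $\uplus$ is associative. The $\BTreeAtom$ case is identical once I note that the contents rule for $\BTreeAtom$ discards the separator key, so both the input $\BTreeAtom(\KeyInstance_1, C_1, \BTreeAtom(\KeyInstance_2, C_2, C_3))$ and the output $\BTreeAtom(\KeyInstance_2, \BTreeAtom(\KeyInstance_1, C_1, C_2), C_3)$ have contents $\ContentsOfCog{C_1} \uplus \ContentsOfCog{C_2} \uplus \ContentsOfCog{C_3}$.

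Structure preservation for the $\ConcatAtom$ case is routine: by Case 2 of \Cref{def:structuralcorrectness}, correctness of $\ConcatAtom(C_1, \ConcatAtom(C_2, C_3))$ is equivalent to joint correctness of $C_1$, $C_2$, $C_3$, which I then reassemble into $\ConcatAtom(\ConcatAtom(C_1, C_2), C_3)$. The hard part will be structure preservation for the $\BTreeAtom$ case, where every separator constraint of Case 4 must be re-established after the rotation. Assuming $\IsStructurallyCorrect{\BTreeAtom(\KeyInstance_1, C_1, \BTreeAtom(\KeyInstance_2, C_2, C_3))}$, I would first record the constraints it yields: $C_1$, $C_2$, $C_3$ are each structurally correct; every record of $C_1$ is $\KeyOrderStrict \KeyInstance_1$; $\KeyInstance_1 \KeyOrder r$ for every record $r$ of $C_2$ and of $C_3$; every record of $C_2$ is $\KeyOrderStrict \KeyInstance_2$; and $\KeyInstance_2 \KeyOrder r$ for every record $r$ of $C_3$. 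I then verify $\IsStructurallyCorrect{\BTreeAtom(\KeyInstance_2, \BTreeAtom(\KeyInstance_1, C_1, C_2), C_3)}$ piece by piece: (i) the new inner node $\BTreeAtom(\KeyInstance_1, C_1, C_2)$ is correct, needing every record of $C_1$ to be $\KeyOrderStrict \KeyInstance_1$ (given) and $\KeyInstance_1 \KeyOrder r$ for every record of $C_2$ (the restriction of the original right-subtree bound to $C_2$); (ii) every record of the new left child's contents $\ContentsOfCog{C_1} \uplus \ContentsOfCog{C_2}$ is $\KeyOrderStrict \KeyInstance_2$, where records of $C_2$ satisfy this directly while records of $C_1$ need the chain $r \KeyOrderStrict \KeyInstance_1 \KeyOrderStrict \KeyInstance_2$; and (iii) $\KeyInstance_2 \KeyOrder r$ for every record of $C_3$ (given). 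Step (ii) is exactly where the explicit side-condition $\KeyInstance_1 \KeyOrderStrict \KeyInstance_2$ is indispensable: when $C_2$ is empty it cannot be recovered from the contents alone, so without it the rewrite could break correctness. Establishing (i)--(iii) yields structural correctness of the output, and together with equivalence preservation this gives $\TransformIsCorrect{\PivotLeftTransform}$; the analogous clockwise argument handles the symmetric $\PivotRightTransform$.
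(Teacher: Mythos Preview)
Your proposal is correct and follows essentially the same approach as the paper's proof: both handle the identity fallback via \Cref{thm:identityIsCorrect}, reduce equivalence preservation to associativity of $\uplus$, and verify the $\BTreeAtom$ structural constraints by the same four-part check (your (i)--(iii) correspond exactly to the paper's properties (1)--(4)), including the same observation that the side condition $\KeyInstance_1 \KeyOrderStrict \KeyInstance_2$ is what rescues the argument when $C_2$ is empty.
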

\begin{proof}
For any instance $C$ that matches neither of $\PivotLeftTransform$'s cases, correctness follows from \Cref{thm:identityIsCorrect}.
Of the remaining two cases, we first consider
$$C = \ConcatAtom(C_1, \ConcatAtom(C_2, C_3))$$
Equivalence follows from from associativity of bag union.
{\small
\\[3mm]$\ContentsOfCog{\ConcatAtom(C_1, \ConcatAtom(C_2, C_3))} $
\begin{equation*}\begin{split}
&=\ContentsOfCog{C_1} \uplus \ContentsOfCog{C_2} \uplus \ContentsOfCog{C_3}\\
&=\ContentsOfCog{\ConcatAtom(\ConcatAtom(C_1, C_2), C_3)} 
\end{split}\end{equation*}
}
$\ConcatAtom$ instances are structurally correct if their children are, so the transformed instance is structurally correct if $\alpha(C_1)$, $\alpha(C_2)$, and $\alpha(C_3)$.  
Hence, if the input is structurally correct, then so is the output and the transform is structurally preserving in this case.
The proof of equivalence preservation is identical for the case where 
{\small
$$C = \BTreeAtom(\KeyInstance_1, C_1, \BTreeAtom(\KeyInstance_2, C_2, C_3)) \textbf{ and } \KeyInstance_1 \KeyOrderStrict \KeyInstance_2$$
}
For structural preservation, we additionally need to show:
\textbf{(1)}~$\forall \RecordInstance \in \ContentsOfCog{C_1} : \RecordInstance \KeyOrderStrict \KeyInstance_1$, 
\textbf{(2)}~$\forall \RecordInstance \in \ContentsOfCog{C_2} : \KeyInstance_1 \KeyOrder \RecordInstance$, 
\textbf{(3)}~$\forall \RecordInstance \in \ContentsOfCog{\BTreeAtom(\KeyInstance_1, C_1, C_2)} : \RecordInstance \KeyOrderStrict \KeyInstance_2$, and
\textbf{(4)}~$\forall \RecordInstance \in \ContentsOfCog{C_3} : \KeyInstance_2 \KeyOrder \RecordInstance$ given that $C$ is structurally correct.

Properties (1) and (4) follow trivially from the structural correctness of $C$. 
Property (2) follows from structural correctness of $C$ requiring that 
$\forall \RecordInstance \in (\ContentsOfCog{C_2} \uplus \ContentsOfCog{C_3}) : \KeyInstance_1 \KeyOrder \RecordInstance$
To show property (3), we first use transitivity to show that $\forall \RecordInstance \in \ContentsOfCog{C_1} : \RecordInstance \KeyOrderStrict \KeyInstance_1 \KeyOrderStrict \KeyInstance_2$.  For the remaining records, $\forall \RecordInstance \in \ContentsOfCog{C_2} : \RecordInstance \KeyOrderStrict \KeyInstance_2$ follows trivially from the structural correctness of $C$.
Thus $\PivotLeftTransform$ is correct~\footnote{Note the limit on $\KeyInstance_1 \KeyOrderStrict \KeyInstance_2$, which could be violated with an empty $C_2$.}
\end{proof}

\begin{corrolary}
$\PivotRightTransform$ is correct.
\end{corrolary}

\section{$\LHSMetaTransform$ / $\RHSMetaTransform$ are meta transforms}
\label{apx:lhsrhsAreMeta}

\begin{proof}
We show only the proof for $\LHSMetaTransform$; The proof for $\RHSMetaTransform$ is symmetric.  We first show that $\LHSMetaTransform$ is an endofunctor.  The kind of $\LHSMetaTransform$ is appropriate, so we only need to show that it satisfies the properties of a functor.  
First, we show that $\LHSMetaTransform$ commutes the identity ($\IdentityTransform$).  In other words, for any instance $C$, $\LHSMetaTransform[\IdentityTransform](C) = C$.  In the case where $C = \ConcatAtom(C_1, C_2)$, then 
{\small
$$\LHSMetaTransform[\IdentityTransform](C) = \ConcatAtom(\IdentityTransform(C_1), C_2) = \ConcatAtom(C_1, C_2)$$ 
}
The case where $\TypeOfCogInstance{C} = \BTreeAtom$ is identical, and $\LHSMetaTransform[\TransformInstance]$ is already the identity in all other cases.
Next, we need to show that $\LHSMetaTransform$ distributes over composition.  
That is, for any instance $C$ and transforms $\TransformInstance_1$ and $\TransformInstance_2$ we need that
$$\LHSMetaTransform[\TransformInstance_1 \circ \TransformInstance_2](C) = \left(\LHSMetaTransform[\TransformInstance_1] \circ \LHSMetaTransform[\TransformInstance_2]\right)(C)$$

\noindent If $C = \ConcatAtom(C_1, C_2)$, 
$\LHSMetaTransform[\TransformInstance_1 \circ \TransformInstance_2](C) = \ConcatAtom(C_1', C_2)$, where $C_1' = \TransformInstance_2(\TransformInstance_1(C_1))$.  For the other side of the equation: 
{\footnotesize
\begin{eqnarray*}
\left(\LHSMetaTransform[\TransformInstance_1] \circ \LHSMetaTransform[\TransformInstance_2]\right)(C) 
&=& \LHSMetaTransform[\TransformInstance_2](\LHSMetaTransform[\TransformInstance_1](C))\\
&=& \LHSMetaTransform[\TransformInstance_2](\ConcatAtom(\TransformInstance_1(C_1), C_2)\\
&=& \ConcatAtom(\TransformInstance_2(\TransformInstance_1(C_1)), C_2)
\end{eqnarray*}
}
The case where $\TypeOfCogInstance{C} = \BTreeAtom$ is similar, and the remaining cases follow from $\LHSMetaTransform[\TransformInstance] = \IdentityTransform$ for all other cases.
Thus $\LHSMetaTransform$ is an functor.  
For $\LHSMetaTransform$ to be a meta transform, it remains to show that for any correct transform $\TransformInstance$, $\LHSMetaTransform[\TransformInstance]$ is also correct.
We first consider the case where $C = \ConcatAtom(C_1, C_2)$ and assume that $\TransformInstance(C_1)$ is both equivalence and structure preserving, or equivalently that $\ContentsOfCog{C_1} = \ContentsOfCog{\TransformInstance(C_1)}$ and $\IsStructurallyCorrect{C_1} \implies \IsStructurallyCorrect{\TransformInstance(C_1)}$.  
\begin{eqnarray*}
\ContentsOfCog{\LHSMetaTransform[\TransformInstance](C)}
&=& \ContentsOfCog{\ConcatAtom(\TransformInstance(C_1), C_2)}\\
&=& \ContentsOfCog{\ConcatAtom(C_1, C_2)} = \ContentsOfCog{C}
\end{eqnarray*}
Thus, $\LHSMetaTransform[\TransformInstance]$ is equivalence preserving for this case.  The proof of structure preservation follows a similar pattern
{\footnotesize
\begin{eqnarray*}
\IsStructurallyCorrect{\LHSMetaTransform[\TransformInstance](C)}
&=& \IsStructurallyCorrect{\ConcatAtom(\TransformInstance(C_1), C_2)}\\
&=& \IsStructurallyCorrect{\TransformInstance(C_1)} \wedge \IsStructurallyCorrect{C_2}
\end{eqnarray*}
}
Given $\IsStructurallyCorrect{C} = \IsStructurallyCorrect{C_1} \wedge \IsStructurallyCorrect{C_2}$ and the assumption of $\IsStructurallyCorrect{C_1} \implies \IsStructurallyCorrect{\TransformInstance(C_1)}$, it follows that $\LHSMetaTransform[\TransformInstance]$ is structure preserving for this $C$.
The proof for the case where $C = \BTreeAtom(\KeyInstance, C_1, C_2)$ is similar, but also requires showing that $\forall \RecordInstance \in \ContentsOfCog{\TransformInstance(C_1)} : \RecordInstance \KeyOrderStrict \KeyInstance$ under the assumption that $\forall \RecordInstance \in \ContentsOfCog{C_1} : \RecordInstance \KeyOrderStrict \KeyInstance$.  
This follows from our assumption that $\ContentsOfCog{\TransformInstance(C_1)} = \ContentsOfCog{C_1}$.
The remaining cases of $\LHSMetaTransform$ are covered under \Cref{thm:identityIsCorrect}.  Thus, $\LHSMetaTransform$ is a meta transform.
\end{proof}

\section{Target Updates are Bounded}
\label{apx:boundedTargetUpdates}

\begin{proof}
By recursion over $\TransformInstance$.  The atomic transforms are the base case.
By definition $\IdentityTransform$ is not in the active domain, so we only need to consider seven possible atomic transforms.
For $\SortTransform$ or $\UnsortTransform$ to be in the active domain, $\TypeOfCogInstance{C}$ must be $\ArrayAtom$ or $\SortedAtom$ respectively.  
By the definition of each transform, $\TypeOfCogInstance{C'}$ will be $\SortedAtom$ or $\ArrayAtom$ respectively
By \Cref{thm:hierarchicalAreEnumerable}, the active domain of any $\ArrayAtom$ or $\SortedAtom$ instance is bounded by $|\AtomicTransformSet|$ and by construction, $|\WeightedTargets_C| = |\PolicyDomain_C| \leq |\AtomicTransformSet|$.  
Hence, the total change in the weighted targets for this case is at most $2 \times |\AtomicTransformSet|$.
Following a similar line of reasoning, the weighted targets change by at most $4 \times |\AtomicTransformSet|$ elements as a result of any $\DivideTransform$, $\CrackTransform$, or $\MergeTransform$.
Next consider $C = \ConcatAtom(\ConcatAtom(C_1, C_2), C_3)$, and consequently $C' = \PivotLeftTransform(C) = \ConcatAtom(C_1, \ConcatAtom(C_2, C_3))$.
For each transform of the form $\LHSMetaTransform[\LHSMetaTransform[T]]$ in the active domain of $C$, there will be a corresponding $\LHSMetaTransform[T]$, as $C_1$ is identical in both paths.  
Similar reasoning holds for $C_2$ and $C_3$.  
Because the policy is local, the weighted targets are independent of any $\LHSMetaTransform$ or $\RHSMetaTransform$ meta transforms modifying them.
Thus, at most, the active domain will lose $T$ and $\LHSMetaTransform[T]$ for $T \in \AtomicTransformSet$, and gain $T$ and $\RHSMetaTransform[T]$ for $T \in \AtomicTransformSet$, and the weighted targets will change by no more than $4 \times |\AtomicTransformSet|$ elements.
Similar lines of reasoning hold for the other case of $\PivotLeftTransform$ and for both cases of $\PivotRightTransform$.  
The recursive cases are trivial, since the weighted targets are independent of prefixes in a local policy.
\end{proof}

\end{document}